\newtheorem{condition}{Condition}
\newtheorem{remark}{Remark}
\newtheorem{theorem}{Theorem}
\newtheorem{lemma}{Lemma}
\DeclareMathOperator*{\argmin}{arg\,min}
\newenvironment{talign}
 {\align}
 {\endalign}
\def\BibTeX{{\rm B\kern-.05em{\sc i\kern-.025em b}\kern-.08em
    T\kern-.1667em\lower.7ex\hbox{E}\kern-.125emX}}
\begin{document}
\title{Resource Allocation for the Training of Image Semantic Communication Networks}
% {\footnotesize \textsuperscript{*}Note: Sub-titles are not captured in Xplore and
% should not be used}
% \thanks{Identify applicable funding agency here. If none, delete this.}
% }
% \author{Yang~Li, Xinyu~Zhou, Jun~Zhao
% \thanks{The authors are all with the School of Computer Science and Engineering, Nanyang Technological University, Singapore. Yang Li and Xinyu Zhou are also with ERI@N, Interdisciplinary Graduate Programme, Nanyang Technological University, Singapore. Contact: yang048@e.ntu.edu.sg, xinyu003@e.ntu.edu.sg, JunZHAO@ntu.edu.sg
% } 

% }
% \author[1]{Xinyu~Zhou}
% \author[2]{Jun Zhao}

% \affil[2]{\normalsize School of Computer Science \& Engineering, Nanyang Technological University, Singapore} 
% \affil[ ]{\normalsize 
% yang048@e.ntu.edu.sg,
% xinyu003@e.ntu.edu.sg,  junzhao@ntu.edu.sg} 

\author{\IEEEauthorblockN{Yang~Li, Xinyu~Zhou, Jun~Zhao} 
\thanks{Y. Li and X. Zhou are with the Graduate College at Nanyang Technological University (NTU), Singapore 639798. J. Zhao is with the College of Computing and Data Science (CCDS) at Nanyang Technological University (NTU), Singapore 639798. CCDS was formerly known as the School of Computer Science and Engineering (SCSE). Y. Li and X. Zhou are both NTU PhD students supervised by J. Zhao. (Emails:  yang048@e.ntu.edu.sg, xinyu003@e.ntu.edu.sg, JunZhao@ntu.edu.sg)\newline \indent Corresponding author: Jun Zhao. \newline \indent
This research is supported partly by Singapore Ministry of Education Academic Research Fund Tier 1 RT5/23, Tier 1 RG90/22, Nanyang Technological University (NTU)-Wallenberg AI, Autonomous Systems and Software Program (WASP) Joint Project, and  Imperial-Nanyang Technological University Collaboration Fund INCF-2024-008.
} 
}

\maketitle
\begin{abstract}

Semantic communication is a new paradigm that aims at providing more efficient communication for the next-generation wireless network. It focuses on transmitting extracted, meaningful information instead of the raw data.
However, deep learning-enabled image semantic communication models often require a significant amount of time and energy for training, which is unacceptable, especially for mobile devices. 
To solve this challenge, 
our paper first introduces a distributed image semantic communication system where the base station and local devices will collaboratively train the models for uplink communication. 
Furthermore, we formulate a joint optimization problem to balance time and energy consumption on the local devices during training while ensuring effective model performance. 
An adaptable resource allocation algorithm is proposed to meet requirements under different scenarios, and its time complexity, solution quality, and convergence are thoroughly analyzed. 
Experimental results demonstrate the superiority of our algorithm in resource allocation optimization against existing benchmarks and discuss its impact on the performance of image semantic communication systems.

\end{abstract}

\begin{IEEEkeywords}
Semantic communication, resource allocation, wireless communication, FDMA, convex optimization.
\end{IEEEkeywords}

\section{Introduction}
Current information technologies near the Shannon capacity limits may not meet the rising data demands of the 6G era \cite{wang2022transformer}. The upcoming 6G wireless technology is critical for emerging applications, e.g., Internet of Everything (IoE) \cite{khan2022digital,zhang2024joint}, virtual reality \cite{wang2023survey}, autonomous driving \cite{qu2024model,ye2021joint}, wireless sensing \cite{wang2019continuous}, and other application domains \cite{lin2023blockchain,park2023enabling,yang2019miras}.
To support these advances, 
this generation must offer high-speed, low-latency, and energy-efficient \cite{pala2023spectral,hu2023computation} communication within the constraints of limited spectrum and power resources.

\textit{Semantic Communication} (SemCom) \cite{shannon1949mathematical} has attracted a lot of attention recently as a promising solution to the limitations of current information technologies. Unlike traditional methods that send all the data, SemCom is task-oriented and enables devices only to transmit desired semantic information extracted from the original data (i.e., text, image, and speech), thereby enhancing communication efficiency \cite{qin2021semantic}. Compared to textual data, the semantic information in an image is inherently implicit and depends on contextual knowledge. Hence, Deep Learning (DL) technology has been widely used in the image SemCom due to its advantage of deciphering the meaning of a figure. Specifically, 
the transmitter could use the DL model to encode the image into a low-dimensional vector to achieve highly efficient compression, which is then restored by the corresponding decoder at the receiver \cite{zhang2022deep}. 
Furthermore, the learning capability of neural networks allows DL models to adjust to varying channel conditions, enhancing their ability to mitigate issues like noise interference.

Implementing DL-enabled image SemCom in a practical communication network also faces a shift from the current device-to-device manner to a sophisticated, network-centric system. Here, we consider a distributed image SemCom system, where each local device is equipped with a personalized DL-enabled SemCom model. During uplink communication, the local devices compress the image and transmit the semantic information. Upon reception, the base station decodes and recovers it. The system is continually trained to meet device requirements and changing channel conditions.

\textbf{Motivation and Challenges.}
However, training a distributed image SemCom system within a real communication network presents several challenges: (1) \textbf{Power issue}. Although training DL models on local devices has become practical with the development in computation capability, the power issue has been a major concern, especially for mobile devices (i.e., phones and laptops). The training of DL models often brings a large amount of energy consumption on local devices, which may be unacceptable for users. (2) \textbf{Communication burden}. In the SemCom process, since the encoder and decoder are located on separate devices, this requires communication between the two devices (e.g., semantic information, model parameters, gradients) to complete the training. Due to the local devices' limited communication resources (e.g., transmission power, bandwidth), this process tends to result in high latency and energy costs. (3) \textbf{Device variability}. Different devices in the network may have different computing capabilities and individual requirements for SemCom (e.g., information compression rate). How to take into account this variability while ensuring effective training for each device is also a tricky problem.

To solve these challenges, we propose a DL-enabled image SemCom distributed system and aim to minimize the overall consumption to achieve more efficient training. The main \textbf{contributions} of our paper are summarized as follows:
\begin{itemize}
\item We propose a DL-enabled image SemCom system, where the local devices collaborate with the base station to train real-time models, adapting to the varying channel conditions and optimizing model performance continuously.
\item A joint optimization problem is formulated that aims to minimize a weighted sum of latency and energy consumption on local devices. We also investigate the relationship between model performance and relevant variables to guarantee the performance of SemCom.

\item An adaptable resource allocation algorithm is developed. Owing to the problem's intricate nature, \mbox{non-convex} ratios, and other complexities, we first decouple the original problem into tractable subproblems. Then, a tailored combination of the Lagrange function and a Newton-like approach is utilized to solve these subproblems. This algorithm dynamically adjusts to different system requirements. We also comprehensively analyze the time complexity, solution quality, and convergence.

\item We implement extensive experiments to compare the proposed algorithm with benchmarks. Our results illustrate the superior performance of our algorithm in optimizing resource allocation while ensuring performance.
\end{itemize}

This paper is structured as follows: Section \ref{Sec:related Work} reviews related work. Section \ref{Sec:System model} introduces the system model and notations. Section \ref{Sec:Problem Formulation} discusses problem formulation, followed by a proposed resource allocation algorithm in Section \ref{Sec:Solution}. Section \ref{Sec:time Complexity} analyzes the algorithm's performance. Simulation results are presented in Section \ref{Sec:Simulation}, and the paper concludes with Section \ref{Sec:Conclusion}.

\section{Literature Review}\label{Sec:related Work}
This section surveys the literature on DL-enabled semantic communication techniques and resource allocation strategies in SemCom networks.

\subsection{DL-enabled Semantic Communication Model}

With the advent of powerful deep learning technologies, a number of studies have investigated the applications of DL-enabled semantic communication on text~\cite{xie2021deep,liu2022extended,kutay2023semantic,peng2022robust}, images~\cite{huang2021deep, lokumarambage2023wireless}, speech~\cite{grassucci2024diffusion}, and videos~\cite{jiang2022wireless}. Specifically, Xie~\emph{et~al.}~\cite{xie2021deep} proposed DeepSC, a SemCom system using deep
learning to maximize communication efficiency by minimizing semantic errors during text transmission.
It employed a Transformer~\cite{vaswani2017attention}-based architecture and further introduced transfer learning~\cite{weiss2016survey} to enhance adaptability across different communication environments. A new metric, sentence similarity, is also introduced to evaluate performance. Liu~\emph{et~al.}~\cite{liu2022extended} presented an Extended Context-based Semantic Communication (ECSC) system, enhancing text transmission by fully leveraging context information both within and between sentences. 
The ECSC encoder captures context with attention mechanisms, while the decoder uses Transformer-XL~\cite{dai2019transformer} for enhanced semantic recovery. Kutay~\emph{et~al.}~\cite{kutay2023semantic} utilized SBERT~\cite{reimers2019sentence} to obtain sentence embeddings and a semantic distortion metric, preserving text meaning while achieving high compression results. They further integrate semantic quantization with semantic clustering, generalizing well across diverse text classification datasets. Peng~\emph{et~al.}~\cite{peng2022robust} introduced R-DeepSC, a robust deep learning-based system using calibrated self-attention and adversarial training to mitigate the effects of ``semantic noise'' in text transmission. R-DeepSC outperforms traditional models only considering physical noise, demonstrating strong resilience to various signal-to-noise ratios.

Other than text, DL-enabled semantic communication also demonstrates satisfying performance on more complex data types (e.g., images and videos). For image transmission, Huang~\emph{et~al.}~\cite{huang2021deep} presented a Generative Adversarial Networks (GANs)-based image SemCom framework, achieving efficient semantic reconstructions.
Lokumarambage~\emph{et~al.}~\cite{lokumarambage2023wireless} proposed an end-to-end image SemCom system, using a pre-trained GAN at the receiver to reconstruct realistic images from semantic segmentation maps generated by the transmitter. In particular, the deployment of GAN allows it to combat noise in poor channel conditions, outperforming conventional compression methods.
Grassucci~\emph{et~al.}~\cite{grassucci2024diffusion} proposed a generative audio semantic communication framework, employing a diffusion model to restore the received information from various degradations such as noise and missing parts. This approach focuses on semantic content rather than exact bitstream recovery, proving robustness to different channel conditions.
For video transmission, Jiang~\emph{et~al.}~\cite{jiang2022wireless} introduced a semantic video conferencing (SVC) network where only critical frames of videos are transmitted, focusing on motions due to static backgrounds and infrequent speaker changes.
Furthermore, they proposed an IR-HARQ framework with a semantic error detector and an SVC-CSI for channel feedback, improving error detection and transmission efficiency.

% Huang~\emph{et~al.}~\cite{huang2022toward} have introduced a Reinforcement Learning-based Adaptive Semantic Coding (RL-ASC) model, which extracts semantic features using a convolutional encoder and reconstructs images with a Generative Adversarial Network (GAN)-based decoder. A transformer-based multimodal SemCom system was proposed in \cite{xie2022task}, which simultaneously handles text and image input. 
% The work \cite{lokumarambage2023wireless} presented an image transmission system by transmitting semantic segmentation maps and using a pre-trained GAN for reconstruction. A ResNet-based semantic signal processing framework was proposed in \cite{kalfa2021towards}, which can be tailored for specific tasks efficiently through a goal filtering method.

% DL has shown its versatility in training diverse models for transmitting different data types, such as text \cite{xie2021deep}, images \cite{liu2021semantics,hu2022robust}, speech \cite{weng2021semantic}, and video \cite{jiang2022wireless}. These studies focused on training neural network-based encoders and decoders for SemCom but typically \textcolor{blue}{neglected} the multi-device model training within a network, an area our research addresses.

\subsection{Resource Allocation in SemCom Networks}

Most SemCom frameworks~\cite{xie2021deep,liu2022extended,peng2022robust,kutay2023semantic,huang2021deep, lokumarambage2023wireless,grassucci2024diffusion,jiang2022wireless,weng2021semantic} primarily focused on one-to-one device communication, and they often lack the adaptability required for large-scale scenarios. 
To bridge this gap, a number of recent studies~\cite{yan2022resource,yan2022qoe,wang2022performance, yang2023energy,zhang2023drl,zhang2023optimization} have shifted their focus towards developing both efficient and effective SemCom systems that are suitable for large-scale communication networks. 
Particularly, resource allocation acts as a significant factor in the above studies, as it is often crucial for improving the scalability and efficiency of networks, especially under resource-limited scenarios. 

Those studies~\cite{yan2022resource,yan2022qoe,wang2022performance, yang2023energy,zhang2023drl,zhang2023optimization} concentrated on optimizing network resource allocation to enhance metrics such as latency, energy consumption and signal quality. 
Specifically, Yan~\emph{et~al.}~\cite{yan2022resource} introduced the concept of semantic spectral efficiency (S-SE), a novel metric designed to optimize resource allocation through strategic channel assignments and semantic symbol transmission. A semantic-aware resource allocation algorithm is also proposed to maximize the overall S-SE of all users. 
Afterward, they extended their work in a later study~\cite{yan2022qoe}, developing an approximate measure of semantic entropy with a new quality-of-experience (QoE) model based on the measure.
Resources, including symbol quantity, channel assignment and power allocation, are optimized to maximize the overall QoE. 
Wang~\emph{et~al.}~\cite{wang2022performance} presented a metric of semantic similarity (MSS), which can jointly capture the semantic accuracy and completeness of the recovered information. 
They integrated a reinforcement learning (RL) algorithm and an attention-based network to optimize the resource allocation policy, maximizing the MSS of the whole network. Yang~\emph{et~al.}~\cite{yang2023energy} introduced a downlink SemCom network, minimizing overall energy usage by optimizing computation capacity, power control, rate allocation, etc. In addition, the SemCom model performance and latency are also considered constraints to the optimization problem. 
Zhang~\emph{et~al.}~\cite{zhang2023drl} considered a task-oriented downlink SemCom network and aimed to enhance the overall transmission efficiency. A deep deterministic policy gradient (DDPG) agent is developed to optimize the balance between data delivery and task accuracy by managing the semantic compression ratio, transmit power, and bandwidth allocation.
Zhang~\emph{et~al.}~\cite{zhang2023optimization} discussed an image SemCom network and aimed to minimize the average transmission latency of all users while satisfying the semantic reliability requirement. They managed to optimize resources such as user association, resource block (RB) allocation, and the selection of semantic information for transmission.

\subsection{Comparison between Our Work and Related Studies} 

Previous related works~\cite{yan2022resource,yan2022qoe,wang2022performance,yang2023energy,zhang2023drl,zhang2023optimization} investigating resource allocation in SemCom networks mainly focused on the inference period rather than model training period in our study. There are also a few studies~\cite{Nguyen2024FedSem,sun2024FedCMTSem,wang2024FedCL} discussing the training framework of SemCom networks, but they did not design the corresponding resource allocation strategy to optimize the system holistically.
In synthesizing the above discussion, we highlight the unique contribution of our work: jointly optimizing latency, energy consumption and model performance while training an image SemCom network.  
We note that while several advancements exist in each domain, our integrated approach presents an innovative method that has yet to be fully explored in the existing literature.

\begin{figure*}
    \centering
    \includegraphics[width =.95\textwidth]{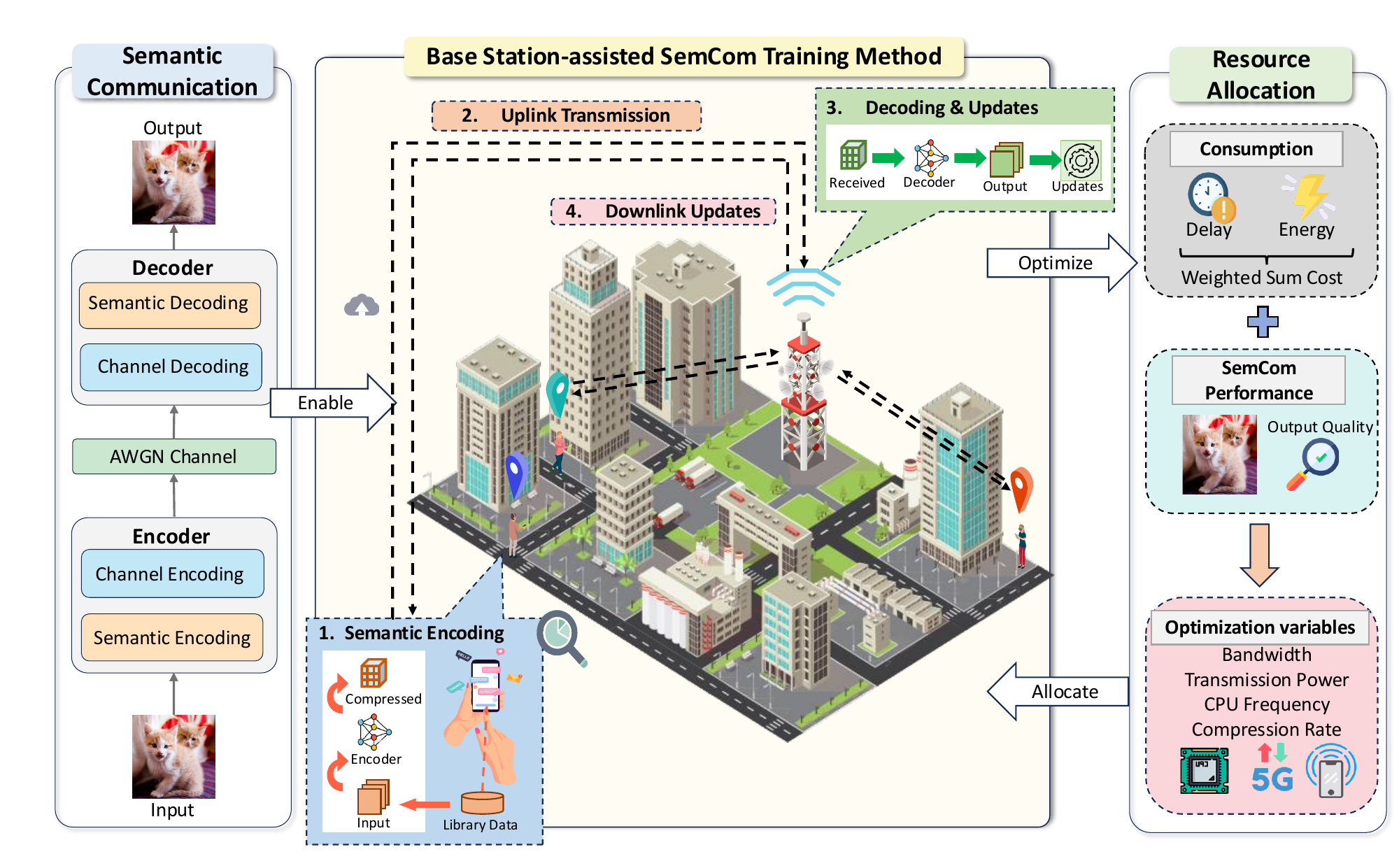}
    % \vspace{-5pt}
    \caption{System model.}
    % \vspace{-15pt}
    \label{fig:system_model}
\end{figure*}

\section{System Model}\label{Sec:System model}

In this paper, we consider an image SemCom network consisting of one Base Station (BS) and a set of mobile devices: $\mathcal{N}=\{1,2,\ldots,N\}$. 
Due to the limited communication resources of mobile devices, they are required to perform SemCom during uplink image transmission.
Given the varying channel conditions, the devices want to efficiently train their models for better SemCom performance.
Hence, as illustrated in Fig.~\ref{fig:system_model}, we have developed a training system that allows mobile devices and the BS to jointly train their SemCom models. 
We first elaborate on our training paradigm in Section~\ref{subsection:SemCom System}, and then we discuss the consumption model and SemCom performance in Sections~\ref{subsection:E,T} and~\ref{subsection:accuracy}, respectively.

\subsection{Training of Semantic Communication Network}\label{subsection:SemCom System}
\textbf{Semantic communication.}
In this paper, we utilize a DL-enabled model to achieve image SemCom. The main components are an encoder for image compression and a decoder for image reconstruction, as illustrated in the left side of Fig.~\ref{fig:system_model}.
% The input to the model is an original image, and the output will be the correspondingly reconstructed image after efficient transmission in channel.
Next, we detail the components and the model objective.

Given an input image $x$, the encoder performs semantic encoding to extract relevant image features. Then, it further transforms these features into a sequence of encoded symbols suitable for transmission, denoted as $X$. The whole encoder function could be expressed as follows:
% We use $f(\cdot\mid\theta)$ to denote the encoder functionis parameterized using a CNN with
% parameters θ.
\begin{align}
    X = f(x\mid\theta),
\end{align}
where $f(\cdot\mid\theta)$ represents the overall encoder function using a deep learning model parameterized by $\theta$.

The encoded sequence $X$ is transmitted through an AWGN channel, where it may encounter various disturbances. The corresponding output signal $Y$ could be modeled as:
\begin{align}
    Y = X+\mathcal{N}^{A},
\end{align}
where $\mathcal{N}^{A}$ denotes the AWGN noise, which is a Gaussian random variable with zero mean and variance $\sigma^2$.

After receiving signal $Y$, the encoder correspondingly performs channel decoding and semantic decoding, aiming to reconstruct the original image while minimizing distortion:
\begin{align}
    \hat{x}=f^{-1}(Y\mid\psi),
\end{align}
where $\hat{x}$ is the reconstructed image and $f^{-1}(\cdot\mid\psi)$ denotes the decoder function with a model parameterized by $\psi$.

The objective of both the encoder and decoder is to jointly
minimize the average distortion between the original 
image $x$ and its reconstruction $\hat{x}$. 
% For ease of representation, we use $\Theta$ to denote model parameters $(\theta,\overline{\theta})$:
\begin{align}
    (\theta^{*},\psi^{*}) = \argmin_{\theta,\psi}~\mathbb{E}[d(x,\hat{x})],
\end{align}
where $d(x,\hat{x})$ is the distortion measurement\footnote{Common measurements for image reconstruction include peak-signal-to-noise ratio (PSNR) and structural similarity index measure (SSIM), etc.} of the $x$ and $\hat{x}$. 

In our considered scenario, the semantic communication models are deployed at both mobile devices and the BS.
The devices utilize encoders to distill semantic information from images for uplink transmission. Upon reception, the BS reconstructs the images using the corresponding decoder.
Specifically, we employ a deep joint source and channel coding (JSCC) model~\cite{bourtsoulatze2019deep} as our SemCom model. The model aims to minimize the average mean squared error (MSE) loss between the original and reconstructed images. We defer a more detailed introduction of the model to Section~\ref{subsection:SemCom System} and describe below how to adapt it for training in our semantic communication network.

% The autoencoder could be regarded as 
% In particular, the encoder first implements semantic encoding and channel encoding, to convert the original input image into a sequence of encoded symbols for transmission.
% % The encoder function $f(\cdot\mid\theta)$ is parameterized using a DL model with parameters $\theta$.
% The encoded sequence undergoes channel-induced disturbances, modeled by non-trainable layers representing AWGN channels.
% The received signal is then decoded by the decoder, which aims to reconstruct the original image while minimizing distortion. 

% This end-to-end system is optimized through gradient computation and backpropagation to improve the fidelity and robustness of transmitted data against channel impairments.

% The architecture of the encoder includes two components: semantic encoding and channel encoding. The semantic encoding model, represented by $f_{\theta_1}(\cdot)$, is used to extract the critical semantic information from the original image. Concurrently, the channel encoder, denoted as $c_{\theta_2}(\cdot)$, is responsible for symbol generation, converting the extracted semantic information into symbols that facilitate transmission.

% Moving to the decoder’s end, it comprises channel decoding, $c_{\theta_3}^{-1}(\cdot)$, used for symbol detection, paired with semantic decoding, $f_{\theta_4}^{-1}(\cdot)$, dedicated to the task of image reconstruction. Then, a simple classifier is adopted to implement image classification to evaluate the reconstructed images' quality further.

\textbf{Iterative Training Method.} 
We propose an iterative training method for our SemCom network, detailed in Algorithm~\ref{Algorithm:Training}. 
Following~\cite{xie2021deep,van2022image,yang2024semantic}, we assume that both the mobile devices and the BS have access to a commonly shared training database. For simplicity, we denote the training data for a specific device $n$ as $Z_n$, assuming that the data has been properly aligned between the devices and the BS beforehand.
The process begins by setting a maximum number of iterations $L$ and initializing the iteration counter $l=0$. Each iteration $l$ of the training process includes the following steps:
\begin{itemize}
    \item Encoding and Transmission:
    \begin{itemize}
        \item Each device $n$ encodes its data $Z_n$ into a semantic signal: $X_n=f(Z_n\mid\theta_n^l)$.
        \item The signal $X_n$ is then transmitted to the BS.
    \end{itemize}
    \item Reception and Decoding at BS:
    \begin{itemize}
        \item The BS receives signals $\bm{Y} = [Y_1, \dots, Y_N]$, where each $Y_n = X_n + \mathcal{N}^{A}_n$.
        \item The BS decodes signals $\bm{Y}$ using the decoder to obtain: $\boldsymbol{\widehat{Z}}= [\widehat{Z}_1,\ldots,\widehat{Z}_N],$ where $\widehat{Z}_n=
        f^{-1}(Y_n \mid \psi_n^l)$.
    \end{itemize}
    \item Loss Computation and Model Update:
    \begin{itemize}
        \item The BS computes the MSE loss $\mathcal{L}_{\textnormal{MSE}}(Z_n, \widehat{Z}_n)$ to evaluate model performance of each device $n$.
        \item The BS updates the whole model to $(\theta_n^{l+1},\psi_n^{l+1})$ by applying the gradient descent on loss.
        % $\boldsymbol{\Theta}^{l+1} \leftarrow \boldsymbol{\Theta}^{l} - \eta \cdot \nabla_{\boldsymbol{\Theta}^{l}}\mathcal{L}_{\textnormal{MSE}}(\boldsymbol{Z}, \boldsymbol{\widehat{Z}})$.
    \end{itemize}
    \item Distribution of Updated Parameters:
    \begin{itemize}
        \item The updated encoder $\theta_n^{l+1}$ is distributed back to each device $n$ for local update.
    \end{itemize}
\end{itemize}

Generally, Algorithm~\ref{Algorithm:Training} is designed to continuously adapt the SemCom models to improve performance and robustness, accounting for channel conditions and device-specific data.

\begin{algorithm}
\footnotesize
\caption{Iterative Training Method} 
\label{Algorithm:Training}
\KwIn{Models $[(\theta_n^{0},\psi_n^{0})\mid_{n \in \mathcal{N}}]$; 
\newline
Training data $\boldsymbol{Z}=[Z_n|_{n \in \mathcal{N}}]$;
% \newline
% Initial model parameters $\boldsymbol{\theta}^{(0)}=[\theta_n^{(0)}|_{n \in \mathcal{N}}]$, 
}

\KwOut{Trained models $[(\theta_n^{l},\psi_n^{l})\mid_{n \in \mathcal{N}}]$.}

Set maximum iterations $L$ and iteration counter $l=0$;

\While{$l \leq L$ or not convergence}{
\For{each mobile device $n$ \underline{in parallel}}{

$X_n \leftarrow f(Z_n\mid\theta_n^{l})$; 

Transmit $X_n$ to the base station; 
}

\For{BS to each device $n$ \underline{in parallel}}{
Receive signal $Y_n$, where $Y_n = X_n + \mathcal{N}^{A}_n$;

$\widehat{Z}_n \leftarrow f^{-1}(Y_n\mid \psi_n^{l})$; 

Compute the MSE loss: $\mathcal{L}_{\textnormal{MSE}}(Z_n,\widehat{Z}_n)$;

Update the whole model: 
$(\theta_n^{l+1},\psi_n^{l+1}) \leftarrow (\theta_n^{l},\psi_n^{l}) - \eta \cdot \nabla_{\theta_n^{l},\psi_n^{l}}\mathcal{L}_{\textnormal{MSE}}(Z_n,\widehat{Z}_n)$; 

Distribute $\theta_n^{l+1}$ to each device $n$; 
}
Let $l \leftarrow l+1$; 
}
\end{algorithm}

% \begin{figure*}
%     \centering
%     \includegraphics[width =\textwidth]{figs/FL_Model.jpg}
%     % \vspace{-5pt}
%     \caption{System model.}
%     % \vspace{-15pt}
%     \label{fig:FL}
% \end{figure*}

% The system design was inspired by the JSAC paper \cite{zhang2022deep}, as shown in Fig.\ref{fig:2}.

% \begin{figure*}
%     \centering
%     \includegraphics[width =0.8\linewidth]{figs/JSAC.png}
%     % \vspace{-5pt}
%     \caption{JSAC.}
%     % \vspace{-15pt}
%     \label{fig:2}
% \end{figure*}

\begin{table}[!t]
\centering
\caption{Notations.}
\begin{tabular}{l|l}
\hline
Description   & Symbol    \\ \hline
The number of total devices   & $N$  \\
Original data on device $n$ & $Z_n$ \\
Reconstructed data at the base station& $\widehat{Z}_n$ \\
Encoder and decoder of device $n$ & $\theta_{n}, \psi_{n}$ \\
Encoded signal of device $n$ & $X_n$ \\
Channel output of device $n$ & $Y_n$ \\
Total available bandwidth & $B_{\textnormal{total}}$ \\
Allocated bandwidth to device $n$ & $B_n$ \\
The power density of Gaussian noise & $N_0$ \\
The number of CPU cycles per sample on device $n$ & $c_{1n}$ \\
The number of CPU cycles per sample on the BS & $c_{2n}$ \\
The number of samples on device $n$ & $D_n$ \\
Local computation frequency of device $n$ & $f_n$ \\
Allocated frequency to device $n$ on the BS & $h_n$ \\
Transmission power of device $n$ & $p_n$ \\
Uplink transmission rate & $r_n$ \\
Signal-to-Noise Ratio & $S_n$ \\
Local computation time on device $n$ & $T_n^{\text{cmp}}$ \\
Uplink transmission time on device $n$ & $T_n^{\text{up}}$ \\
The effective switched capacitance  & $\kappa$\\
% Ground truth label  &  $K^n$ \\
% Retrieved label  &  $\widehat{K}^n$ \\
Size of per sample data on device $n$  & $d_{n}$ \\
Compression rate of semantic communication & $\rho_n$ \\
Local computation energy of device $n$ & $E_n^{\text{cmp}}$ \\
Uplink transmission energy of device $n$ & $E_n^{\text{up}}$ \\
Overall energy consumption among devices & $E$ \\
Weight parameters for time and energy & $\omega_1,\omega_2$\\
Maximum transmission power & $p_n^{\textnormal{max}}$\\
Maximum frequency on device $n$ & $f_n^{\textnormal{max}}$\\
Maximum allocated frequency to device $n$ on the BS & $h_n^{\textnormal{max}}$\\
Maximum transmission power & $p_n^{\textnormal{max}}$\\
Maximum and minimum compression rate & 
$\rho_n^{\textnormal{max}},\rho_n^{\min}$\\
Minimum PSNR requirement of device $n$ & $P_n^{\textnormal{min}}$ \\
% Minimum transmission rate & $\widehat{r}_n(f_n,\rho_n,\hspace{.8pt}\mathcal{T})$ \\
% Auxiliary Variables & $\mathcal{T},\bm{\gamma},\bm{\delta}$ \\
% Lagrange multipliers & $\bm{\alpha},\bm{\beta},\zeta,\bm{\eta},\bm{}\nu,\bm{\iota}$ \\
\hline
\end{tabular}
\label{tab:notations}
\end{table}

% To achieve this objective, we will formulate a joint optimization problem that addresses both time and energy consumption while also taking accuracy into account. In the following subsections \ref{subsection:E,T} and \ref{subsection:accuracy}, we will discuss time completion, energy consumption and model accuracy, respectively.

\subsection{Computation and Transmission Model}\label{subsection:E,T}

In this subsection, we give a detailed introduction to our computation and transmission model within the SemCom system. Considering the BS usually possesses abundant communication resources (e.g., high transmission power and abundant bandwidth), our primary focus lies in minimizing the consumption of mobile devices and the computation consumption of the BS~\cite{luo2020hfel,yang2020energy,dinh2020federated,yang2023detfed}.
We first introduce the frequency division multiplexing technique used, followed by the formulation of time completion as well as energy consumption.

\textbf{FDMA.} Following \cite{luo2020hfel,yang2020energy,zhou2022joint}, we adopt FDMA technology, which could be implemented simply for mobile devices with limited computation capability \cite{li2021talk}. Specifically, given a total bandwidth $B_{\textnormal{total}}$, the bandwidth allocated to each user $n$ is denoted as $B_n$, and $\sum_{n=1}^{N} B_n \leq B_{\textnormal{total}}$.

\textbf{Time consumption of mobile devices.}
In each training iteration, mobile device $n$ encodes the original image data and then transmits the encoded signal to the BS.
Hence, we first analyze the computation time consumption for encoding.
Assuming that the CPU frequency $f_n$ does not change over time until it has been optimized, the computation time on device $n$ is:
\begin{align}
    T_n^{\text{cmp}} =  \frac{c_{1n} D_n}{f_n},\label{equa:T_cmp_D}
\end{align}
where $c_{1n}$ and $D_n$ are the required CPU cycles per sample and the number of samples on device $n$, respectively. 
Based on the Shannon formula, we define the uplink transmission rate of device $n$ as follows:
\begin{align}
    r_n= B_n \log_2{\big(1+\frac{p_ng_n}{N_0B_n}\big)},
\end{align}
where $p_n$ is the transmission power, $g_n$ is the channel gain and $N_0$ is the power spectral density of Gaussian noise. 
Let $d_{n}$ represent the size of per sample on device $n$, with a semantic information compression rate of $\rho_n$. According to Algorithm~1, the uplink transmission time $T_n^{\text{up}}$ of device $n$ is:
\begin{align}\label{transmission time}
    T_n^{\text{up}} =  \frac{\rho_nd_nD_n}{r_n},
\end{align}
where $\rho_n$ is the compression rate and is defined as the ratio of the size of semantic information to that of the original data. Thereafter, the overall time consumption $T_n^{D}$ on device $n$ in one iteration is expressed as:
\begin{align}
    T_n^{D} = T_n^{\text{cmp}}+T_n^{\text{up}}.\label{equa:T_up}
\end{align}

\textbf{Time consumption of the BS.} Upon receiving the encoded signal, the BS decodes it for reconstruction and updates the SemCom model. The updated model is finally sent back to each device $n$. Since computation consumption is the main source of overhead, we define the BS's time corresponding to each device $n$ as follows: 
\begin{align}
    T_n^{S} =  \frac{c_{2n} D_n}{h_n},\label{equa:T_cmp_BS}
\end{align}
where $c_{2n}$ denotes the required CPU cycles for decoding and model updating per sample, and $h_n$ is the assigned frequency to device $n$ on the BS.

\textbf{Energy consumption of mobile devices.} 
Following \cite{dinh2020federated}, we define the computation energy consumption of device $n$ during one round of training as follows:
\begin{align}
    E_n^{\text{cmp}} =  \kappa c_{1n} D_n f_n^2,
\end{align}
where $\kappa$ denotes the effective capacitance coefficient. Since the transmission time $T_n^{\text{up}}$ is already obtained in (\ref{transmission time}), the transmission energy consumption $E_n^{\text{up}}$ could be formulated as: 
\begin{align}
    E_n^{\text{up}} = p_n T_n^{\text{up}} = p_n \frac{ \rho_nd_{n}D_n}{r_n}.
\end{align}
Hence, the overall energy consumption of device $n$ is:
\begin{align}
    E_n^{D}= E_n^{\text{cmp}}+E_n^{\text{up}}.
\end{align}

\textbf{Energy consumption of the BS.} Based on computation time $T_n^{S}$ in (\ref{equa:T_cmp_BS}), we define $E_n^{S}$ as the BS's energy consumption corresponding to each device $n$:
\begin{align}
    E_n^{S} =  \kappa c_{2n} D_n h_n^2.
\end{align}

% Until now, we can generalize the overall consumption to each device $n$ within the network. In one iteration, 
% the overall time consumption $T_n$ is:
% \begin{align}
%     T_n = T_n^{D}+T_n^{S}=
% \end{align}

\subsection{SemCom Performance Analysis }\label{subsection:accuracy}
We employ the Peak Signal-to-Noise Ratio (PSNR) metric to evaluate the performance of the image SemCom model.
PSNR is computed from the MSE loss and indicates the similarity between the original and reconstructed images at the pixel level~\cite{huang2021deep,huang2022toward,wu2023semantic,dong2022innovative}.
In our scenario, given the SemCom models, we examine PSNR's dependence on two key factors: the semantic information compression rate $\rho_n$ and the Signal-to-noise ratio (SNR) $S_n$, where $S_n = \frac{p_ng_n}{N_0B_n}$. Higher values of $\rho_n$ and $S_n$ improve the quality of the reconstructed image, as they mitigate information loss caused by compression and enhance transmission reliability.
Based on the observed characteristics of PSNR in SemCom~\cite{bourtsoulatze2019deep} and other practical applications~\cite{kumari2019fair, xiong2020reward}, the PSNR value can be modeled as a concave function $P(\cdot)$ concerning $\rho_n$ and $S_n$. 
Hence, we make an assumption of $P(\rho_n,S_n)$ that is presented in Condition~\ref{condition:accuracy_function}. 

\begin{condition}\label{condition:accuracy_function}
    The PSNR function $P(\rho_n,S_n)$ for any $n \in \mathcal{N}$ is concave and non-decreasing with respect to $\rho_n > 0$ and $S_n > 0$; i.e., $P^{\prime\prime}(x) < 0$ and $P^{\prime}(x) > 0$ for $x > 0$, where $x$ denotes $\rho_n$ or $S_n$.
\end{condition}

The concavity of $P(\rho_n,S_n)$ means diminishing marginal gain.
For the specific expression of $P(x)$ used in the experiments, we will discuss it in Section VII-B.

\section{Joint Optimization Problem Formulation}\label{Sec:Problem Formulation}

In this section, we begin with formulating the joint optimization problem, followed by a discussion on the challenges and strategies for solving it.

\subsection{Problem Formulation}

The training of the SemCom network can cause considerable overhead to mobile devices and the base station.
Due to limited resources, one of our goals is to minimize the overall time consumption in one training iteration:
\begin{align}\label{obj:1}
% \min_{\bm{f},\bm{g},\bm{p},\bm{B},\bm{\lambda}} 
\min \left\{\max_{n\in\mathcal{N}}\{T_n\}\right\},
\end{align}
where $T_n=T_n^{D}+T_n^{S}$ denotes device $n$'s  total training time consumption. The $\max$ operation in (\ref{obj:1}) is for fairness (i.e., devices that become ``bottlenecks'' in time consumption will be continuously optimized). 
% By doing so, we can use limited resources to fairly minimize all devices' training time consumption.
Concurrently, we also aim to minimize the overall energy consumption within the network:
\begin{align}\label{obj:2}
    \min \left\{\sum_{n=1}^NE_n \right\},
\end{align}
where $E_n = E_n^{D} + E_n^{S}$ denotes each device $n$'s total energy consumption. Synthesizing the above two objectives (\ref{obj:1}) and (\ref{obj:2}), a joint optimization problem can be formulated as follows:
\begin{subequations}\label{Original_Problem}
\begin{align}
    \mathbb{P}_1:~
     \min_{\boldsymbol{p},\boldsymbol{B},\boldsymbol{f},\boldsymbol{h},\boldsymbol{\rho}} &
     \left\{ \omega_1 \max_{n\in\mathcal{N}}\{T_n\}
     +\omega_2 \sum_{n=1}^N E_n \right\}
     ,
    \tag{\ref{Original_Problem}} \\
    \text{s.t.}:~ 
    % & \rho_n^{\min} \leq \rho_n \leq \rho_n^{\textnormal{max}} \\
    & p_n \leq p_n^{\textnormal{max}},~\forall n \in \mathcal{N}, \label{constra:p}\\
    & f_n \leq f_n^{\textnormal{max}},~\forall n \in \mathcal{N}, \label{constra:f}\\
    % & B_n^{\min} \leq B_n,~\forall n \in \mathcal{N}, \label{constra:B} \\
    & \sum_{n=1}^{N} B_n \leq B_{\textnormal{total}}, \label{constra:B_sum} \\
    & h_n \leq h_n^{\textnormal{max}},~\forall n \in \mathcal{N}\label{constra:h_sum} \\
    & P_n^{\textnormal{min}} \leq P_n(\rho_n,S_n(p_n,B_n)),~\forall n \in \mathcal{N}, \label{constra:A}\\
    & \rho_n^{\min} \leq \rho_n \leq \rho_n^{\textnormal{max}},~\forall n \in \mathcal{N}, \label{constra:rho}
\end{align}
\end{subequations}
where $\omega_1, \omega_2 \in [0,1]$ are weights for time and energy consumption, respectively, and we can enforce $\omega_1+\omega_2=1$ after normalization. Constraints~(\ref{constra:p}) and (\ref{constra:f}) set the ranges of transmission power and computation frequency for each device $n$. 
Constraint~(\ref{constra:B_sum}) imposes an upper limit on the total available uplink bandwidth. 
Constraint~(\ref{constra:h_sum}) sets the maximum available computation resources for each device $n$ at the BS.
$P_n^{\textnormal{min}}$ in constraint~(\ref{constra:A}) is the required minimum model accuracy for device $n$, and constraint~(\ref{constra:rho}) sets the range of values for the compression rate.

\subsection{Challenges and Strategies of Solving Problem $\mathbb{P}_1$ }

Problem $\mathbb{P}_1$ has a complex construction form and is subject to complicated constraints (\ref{constra:p})-(\ref{constra:rho}). In the objective function, both $T_n^{\text{up}}$ and $ E_n^{\text{up}}$ are non-convex and compound $p_n,B_n,\rho_n$ three variables, making them extremely complex. The presence of the max function further complicates the problem-solving process. 
Besides, the joint concavity of  $A_n(\rho_n,S_n)$ in constraint (\ref{constra:A}) with respect to $\rho_n$ and $S_n$ is also not guaranteed, as we discuss in Section~\ref{subsection:accuracy}.

To streamline the problem $\mathbb{P}_1$ for efficient resolution, we first introduce an auxiliary variable $\mathcal{T}$. This variable replaces the time completion $T$ in (\ref{Original_Problem}), accompanied by an additional constraint. Consequently, we reformulate $\mathbb{P}_1$ as follows:
% Next, we discuss how to transform the original problem $\mathbb{P}_1$ into a more simplified form for efficient resolution.
% Firstly, we introduce an auxiliary variable $\mathcal{T}$ to replace the original time completion $T$ in  (\ref{Original_Problem}) with an additional constraint. Then we rewrite $\mathbb{P}_1$ as: 
\begin{subequations}\label{Problem2}
\begin{align}
    \mathbb{P}_2:~
    \min_{\boldsymbol{p},\boldsymbol{B},\boldsymbol{f},\boldsymbol{\rho}, \mathcal{T}}~ 
    & \left\{\omega_1 \mathcal{T} +\omega_2 \sum_{n=1}^N E_n \right\},
    \tag{\ref{Problem2}} \\
    \text{s.t.}:~ 
    &\text{(\ref{constra:p}), (\ref{constra:f}), (\ref{constra:B_sum}), 
    (\ref{constra:h_sum}), 
    (\ref{constra:A}), (\ref{constra:rho}),} \nonumber\\
    &  T_n^{\text{up}}+ T_n^{\text{cmp}}+ T_n^{S} \leq \mathcal{T}, ~\forall n \in \mathcal{N}. \label{constra:Told}
\end{align}
\end{subequations}

However, the problem of $T_n^{\text{up}}$ and $ E_n^{\text{up}}$ being non-convex and $A_n(\rho_n,S_n)$ being non-concave remains unsolved. Considering that they are coupled with several variables, we decompose $\mathbb{P}_2$ into two subproblems for further simplification. Since $A_n(\rho_n,S_n)$ is not guaranteed to be jointly concave with respect to $\rho_n$ and $S_n$, and variables $p_n$ and $B_n$ appear together in the term $S_n = \frac{p_ng_n}{N_0B_n}$, $\mathbb{P}_3$ involves optimization variables $f_n$, $\rho_n$ and $\mathcal{T}$, while $\mathbb{P}_4$ involves variables $p_n$ and $B_n$. 
\begin{subequations}\label{Subproblem1}
\begin{align}
\mathbb{P}_3(\boldsymbol{p},\boldsymbol{B}):~ 
    \min_{\boldsymbol{f},\boldsymbol{\rho},\boldsymbol{h}, \mathcal{T}}~ 
    & \left\{\omega_1 \mathcal{T} +\omega_2 \sum_{n=1}^N (E_n^{\textnormal{cmp}}+E_n^{S})\right\}
    % \kappa c_n D_n f_n^2
    ,
    \tag{\ref{Subproblem1}} \\
    \text{s.t.}:~ 
    & \text{(\ref{constra:f}), 
    (\ref{constra:h_sum}), (\ref{constra:A}),  (\ref{constra:rho}), (\ref{constra:Told})}. \nonumber
\end{align}
\end{subequations}
\begin{subequations}\label{Subproblem2}
\begin{align}\mathbb{P}_4(\boldsymbol{f},\boldsymbol{\rho}, \mathcal{T}\hspace{.8pt}):~
    \min_{\boldsymbol{p},\boldsymbol{B}}~ 
    & \left\{\omega_2 \sum_{n=1}^{N} E_n^{\text{up}}  \right\}
    % p_n \frac{ \rho_n d_{n}}{r_n}
    ,
    \tag{\ref{Subproblem2}} \\
    \text{s.t.}:~ 
    & \text{(\ref{constra:p}), 
    %(\ref{constra:B}),
    (\ref{constra:B_sum}),   (\ref{constra:A}),
    (\ref{constra:Told})}. \nonumber
\end{align}
\end{subequations}
% To solve problem $\mathbb{P}_1$, we adopt the following
% alternating optimization process:
% \begin{itemize}
% \item given $\boldsymbol{p} $ and $\boldsymbol{B}$, we solve $\mathbb{P}_3$ to optimize $\boldsymbol{f}$, $\boldsymbol{\rho}$, and $\mathcal{T}$
% \item given $\boldsymbol{f}$, $\boldsymbol{\rho}$, and $\mathcal{T}$, we solve $\mathbb{P}_4$ to optimize $\boldsymbol{p} $ and $\boldsymbol{B}$
% \end{itemize}
To solve problem $\mathbb{P}_1$, we adopt the following
alternating optimization process:
\begin{itemize}
\item given $\boldsymbol{p} $ and $\boldsymbol{B}$, we solve $\mathbb{P}_3$ to optimize $\boldsymbol{f}$, $\boldsymbol{h}$, $\boldsymbol{\rho}$, and $\mathcal{T}.$
\item given $\boldsymbol{f}$, $\boldsymbol{h}$, $\boldsymbol{\rho}$,  and $\mathcal{T}$, we solve $\mathbb{P}_4$ to optimize $\boldsymbol{p} $ and $\boldsymbol{B}.$
\end{itemize}
It is worth noting that despite the decomposition, $\mathbb{P}_4$ is still a challenging pseudoconvex ratio as validated by Lemma \ref{lemma1}:

\begin{lemma}\label{lemma1}
    For Problem $\mathbb{P}_4$, we have:
    \begin{itemize}
        \item Rate $r_n$ is jointly concave with respect to $p_n$ and $B_n$;
        \item Ratio $\frac{p_n\rho_nd_{n}}{r_n}$ in $E_n^{\text{up}}$ is jointly pseudoconvex with regard to  $p_n$ and $B_n$.
    \end{itemize}
\end{lemma}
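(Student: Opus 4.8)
The plan is to treat the two bullets in order, since the joint concavity of $r_n$ is exactly what the pseudoconvexity argument will consume. For the first bullet, I would set $a_n = g_n/N_0$ and recognize that $r_n = B_n \log_2\!\big(1 + a_n p_n / B_n\big)$ is the perspective of the scalar map $u \mapsto \log_2(1+u)$, evaluated at $u = a_n p_n/B_n$ with scaling variable $B_n$. Since $\log_2(1+u)$ is concave for $u>0$ (by Condition-type second-derivative facts, or directly), and the perspective operation preserves concavity, the map $(u,B_n) \mapsto B_n \log_2(1+u/B_n)$ is jointly concave on $B_n>0$; composing with the affine substitution $u = a_n p_n$ preserves concavity, so $r_n$ is jointly concave in $(p_n,B_n)$. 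As a fully self-contained alternative I would instead form the $2\times 2$ Hessian of $r_n$ and check that its diagonal entries are nonpositive and its determinant nonnegative, confirming negative semidefiniteness; either route settles the first claim.

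For the second bullet I would invoke the first-order characterization of pseudoconvexity: a differentiable $\phi$ on an open convex set is pseudoconvex if $\nabla\phi(x)^\top(y-x) \ge 0$ implies $\phi(y) \ge \phi(x)$. Writing the ratio as $\phi = N/r_n$ with $N(p_n) = \rho_n d_n\, p_n$, the decisive structural facts are that, within $\mathbb{P}_4$ (where $\rho_n$ and $d_n$ are fixed), $N$ is \emph{affine} and strictly positive on the working domain $p_n,B_n>0$, while $r_n$ is concave (first bullet) and strictly positive there. Differentiating gives $\nabla\phi = (r_n\nabla N - N\nabla r_n)/r_n^2$, and the argument then reduces to a sign-chasing estimate.

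The core step exploits that $N$ affine gives the exact identity $\nabla N(x)^\top(y-x) = N(y)-N(x)$, whereas $r_n$ concave gives only the inequality $\nabla r_n(x)^\top(y-x) \ge r_n(y)-r_n(x)$. Substituting both, and using $N(x)>0$ to flip the sign when multiplying the concavity inequality by $-N(x)$, yields
\begin{align}
r_n(x)^2\,\nabla\phi(x)^\top(y-x) \le r_n(x)N(y) - N(x)r_n(y). \nonumber
\end{align}
Hence $\nabla\phi(x)^\top(y-x) \ge 0$ forces $r_n(x)N(y) \ge N(x)r_n(y)$, and dividing by the positive product $r_n(x)r_n(y)$ gives $\phi(y) \ge \phi(x)$, which is exactly pseudoconvexity.

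I expect the main obstacle to be the second bullet rather than the first. The concavity is essentially a textbook perspective argument, but the pseudoconvexity hinges on it being precisely the affineness of the numerator (so the first-order relation holds with \emph{equality}) together with strict positivity of both $N$ and $r_n$ that lets the inequality chain close; a merely convex numerator would demand an extra nonnegativity argument and would not deliver the clean bound above. I would therefore be most careful to fix the working domain $p_n,B_n>0$ explicitly so that positivity of $N$, positivity and concavity of $r_n$, and differentiability all hold simultaneously, and to track the sign reversal introduced by the factor $-N(x)$ when the concavity inequality is inserted.
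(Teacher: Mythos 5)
Your proof is correct, and for the second bullet it takes a genuinely different, more self-contained route than the paper. For the first bullet the paper simply writes out the $2\times 2$ Hessian of $r_n$ and verifies $x^{T}Hx=-\frac{(x_1B_n-x_2p_n)^2g_n^2}{B_n^3N_0^2(1+p_ng_n/(N_0B_n))^2\ln 2}\leq 0$; your primary route via the perspective of $u\mapsto\log_2(1+u)$ reaches the same conclusion more abstractly, and your fallback Hessian check is literally what the paper does, so this part is essentially equivalent. For the second bullet the paper does not prove anything: it observes that the numerator is affine and the denominator is positive and concave, and then cites page~245 of Cambini and Martein for the pseudoconvexity of such ratios (the paper's paraphrase of that result in fact garbles the convex/concave roles of numerator and denominator, which your argument sidesteps entirely). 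You instead establish the first-order definition $\nabla\phi(x)^{\top}(y-x)\geq 0\Rightarrow\phi(y)\geq\phi(x)$ directly via the bound $r_n(x)^2\,\nabla\phi(x)^{\top}(y-x)\leq r_n(x)N(y)-N(x)r_n(y)$, and the sign-chasing there is sound: the affine identity for $N$, the concavity inequality for $r_n$ flipped by $-N(x)<0$, and the final division by $r_n(x)r_n(y)>0$ all go in the right direction on the domain $p_n,B_n>0$. What your route buys is transparency about exactly which hypotheses do the work; what the citation buys is brevity. One small quibble: your closing claim that affineness of $N$ is decisive is an overstatement --- your own chain only uses $\nabla N(x)^{\top}(y-x)\leq N(y)-N(x)$ after multiplication by the positive $r_n(x)$, so a non-negative \emph{convex} numerator would close the argument just as cleanly, which is precisely the generality of the cited textbook result. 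This does not affect the validity of the proof.
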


\begin{proof}
    For the rate $r_n$ as a function of $p_n$ and $B_n$, we calculate the Hessian matrix as follows:
    \begin{align}
    H = 
    \begin{bmatrix}
    -\frac{g_n^2}{B_nN_0^2{(1+\frac{p_ng_n}{N_0B_n})}^2\ln{2}} & 
    \frac{p_n g_n^2}{B_n^2N_0^2{(1+\frac{p_ng_n}{N_0B_n})}^2\ln{2}} \\
    \frac{p_n g_n^2}{B_n^2N_0^2{(1+\frac{p_ng_n}{N_0B_n})}^2\ln{2}} & 
    -\frac{p_n^2g_n^2}{B_n^3N_0^2{(1+\frac{p_ng_n}{N_0B_n})}^2\ln{2}} 
    \end{bmatrix}
    \notag
\end{align}
Then for any vector $x=[x_1,x_2]^T \in {\mathbb{R}}^2$, we obtain $x^{T}Hx=-\frac{{(x_1B_n-x_2p_n)}^2{g_n^2}}{B_n^3N_0^2(1+\frac{p_ng_n}{N_0B_n})^2\ln{2}} \leq 0$, so $H$ is a negative semidefinite matrix, which means that $r_n(p_n,B_n)$ is a jointly concave function. Besides, $p_n \rho_nd_{n}$ is an affine function with $p_n$. Referring to Page~245 of the book \cite{cambini2008generalized}, the ratio is deemed pseudoconvex when the numerator satisfies the properties of being non-negative, concave, and differentiable, and the denominator satisfies the properties of being positive, convex, and differentiable. Thus,  Lemma \ref{lemma1} is proved. 
\end{proof}

\begin{remark}
    Based on the proof of Lemma \ref{lemma1}, $\frac{p_n \rho_nd_{n}}{r_n}$ is jointly pseudoconvex and can also be referred to as a concave-convex ratio, marked by a concave numerator and a convex denominator. Unfortunately, the sum of pseudoconvex functions does not inherently retain pseudoconvexity, presenting a challenge in solving such sum-of-ratio problems \cite{jong2012efficient,shen2018fractional1,shen2018fractional2}.
\end{remark}

To address the sum of pseudoconvex ratios problem in $\mathbb{P}_4$, a Newton-like method will be introduced in Section~\ref{section:SUB2}. Generally, our strategy entails the initial random selection of feasible $(\bm{p},\bm{B})$, followed by solving $\mathbb{P}_3$ to obtain the optimal solutions $\bm{f}^*$, $\bm{\rho}^*$ and $\mathcal{T}^*$. Subsequently, we solve $\mathbb{P}_4$ to derive the optimal $\bm{p}^*$ and $\bm{B}^*$. Optimums to $\mathbb{P}_2$ progressively converge through the iterative resolution of $\mathbb{P}_3$ and $\mathbb{P}_4$.

\section{Solutions to the joint optimization problem}
\label{Sec:Solution}
In this section, we will solve subproblems $\mathbb{P}_3$ and $\mathbb{P}_4$, and then present a resource allocation algorithm. 
\subsection{Solution to $\mathbb{P}_3$}\label{section:SUB1}

To solve problem $\mathbb{P}_3$ more efficiently, we first define $\overline{\rho}_n(p_n,B_n)$ such that
$P_n(\overline{\rho}_n,S_n(p_n,B_n))=P_n^{\textnormal{min}}$. Then constraint (\ref{constra:A}) is equivalent to
\begin{align}
\rho_n \geq  \overline{\rho}_n(p_n,B_n).  \label{constra:Aeqiv}
\end{align}
Considering constraint (\ref{constra:rho}) to the compression rate, we have:
\begin{itemize}
\item If $\overline{\rho}_n(p_n,B_n) > \rho_n^{\textnormal{max}}$,
then $\mathbb{P}_3(\boldsymbol{p},\boldsymbol{B})$ has no solution.
\item If $\overline{\rho}_n(p_n,B_n) \leq \rho_n^{\textnormal{max}}$,
then the solution $\boldsymbol{\rho}$ is given by $\boldsymbol{\rho}^{*}(\boldsymbol{p},\boldsymbol{B})$, whose $n$-th dimension is 
$\rho_n^{*}(p_n,B_n):= \max\{\overline{\rho}_n(p_n,B_n) ,\,\rho_n^{\textnormal{min}} $ according to~(\ref{constra:rho}) and~(\ref{constra:Aeqiv}).
\end{itemize}
After obtaining the optimal $\bm{\rho}^*$, we can simplify original problem $\mathbb{P}_3$ into the following:
\begin{subequations}\label{P5}
\begin{align}
\mathbb{P}_5(\boldsymbol{p},\boldsymbol{B},\boldsymbol{\rho}^{*}):
    \min_{\boldsymbol{f},\boldsymbol{h},\mathcal{T}} 
    &\left\{\omega_1 \mathcal{T} +\omega_2 \sum_{n=1}^N (E_n^{\textnormal{cmp}}+E_n^{S})\right\}
    ,\tag{\ref{P5}}  \\
    \text{s.t.}:~ 
    & \text{(\ref{constra:f}), (\ref{constra:h_sum})}, \nonumber\\  
    & \hspace{-40pt} T_n^{\text{up}}+ T_n^{\text{cmp}}(f_n)+ T_n^{S}(h_n)  \leq \mathcal{T}, ~\forall n \in \mathcal{N}. \label{constra:T}  
\end{align}
\end{subequations}

When both the objective function and constraints of a minimization problem are convex, it constitutes a convex optimization problem, leading to the following lemma:
\begin{lemma}\label{lemma:KKT}
    For a convex optimization problem, the Karush--Kuhn--Tucker (KKT) conditions are
    \begin{itemize}
        \item sufficient to find the optimum, and
        \item are necessary for optimality if at least one of the regularity conditions\footnote{Regularity conditions refer to a variety of constrained conditions under which a minimizer point $x^*$ of the original optimization problem has to satisfy the KKT conditions.} holds.
    \end{itemize}
\end{lemma}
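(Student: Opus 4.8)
The plan is to treat the two assertions separately, since sufficiency rests purely on convexity while necessity additionally requires a regularity condition. Throughout I would write the problem in standard form $\min f(x)$ subject to $g_i(x) \leq 0$ and $h_j(x) = 0$, with $f$ and the $g_i$ convex and the $h_j$ affine, and form the Lagrangian $L(x,\lambda,\mu) = f(x) + \sum_i \lambda_i g_i(x) + \sum_j \mu_j h_j(x)$. The KKT conditions are stationarity $\nabla_x L = 0$, primal feasibility, dual feasibility $\lambda_i \geq 0$, and complementary slackness $\lambda_i g_i(x) = 0$.

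For the sufficiency part, I would first observe that when $\lambda_i \geq 0$ the map $x \mapsto L(x,\lambda,\mu)$ is a nonnegative combination of convex functions plus affine terms, hence convex in $x$. Then stationarity $\nabla_x L(x^*,\lambda^*,\mu^*) = 0$ identifies $x^*$ as an unconstrained global minimizer of this convex function. The closing step chains inequalities: for any feasible $x$, complementary slackness and feasibility give $f(x^*) = L(x^*,\lambda^*,\mu^*) \leq L(x,\lambda^*,\mu^*) \leq f(x)$, where the last bound uses $\lambda_i^* g_i(x) \leq 0$ together with $h_j(x) = 0$. This establishes global optimality of $x^*$, so any KKT point solves the problem.

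For the necessity part, I would assume $x^*$ is optimal and that a regularity condition (for instance Slater's condition or a linear independence constraint qualification) holds, and aim to construct multipliers satisfying KKT. The geometric core is that optimality forces $-\nabla f(x^*)$ to lie in the normal cone of the feasible set at $x^*$, and that under the constraint qualification this normal cone coincides with the cone generated by the gradients of the active constraints. Extracting nonnegative coefficients from this cone membership is precisely a theorem-of-the-alternative statement, so I would apply Farkas' lemma, equivalently a separating-hyperplane argument, to the linearized active system to produce the $\lambda_i^* \geq 0$ and the $\mu_j^*$.

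I expect the main obstacle to be the necessity direction, and specifically the passage from local optimality to the existence of the multipliers. Sufficiency is essentially bookkeeping once convexity of the Lagrangian is noted, whereas necessity genuinely fails without a constraint qualification; the delicate point is invoking the right regularity condition so that the linearized cone faithfully represents the feasible directions, after which Farkas' lemma delivers the multipliers. Since this is a classical result, in the final writeup I would either carry out the Farkas step explicitly or simply invoke a standard convex optimization reference and limit the argument to the sufficiency chain, which is what the downstream application of this lemma actually needs.
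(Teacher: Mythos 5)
Your proposal is correct. The paper offers no proof of its own here---it simply cites Chapters 5.2.3 and 5.5.3 of Boyd and Vandenberghe---and your argument is essentially the standard one found there: the sufficiency chain $f(x^*) = L(x^*,\lambda^*,\mu^*) \leq L(x,\lambda^*,\mu^*) \leq f(x)$ via convexity of the Lagrangian in $x$, and necessity under a constraint qualification. The only cosmetic difference is that you route necessity through normal cones and Farkas' lemma, whereas the cited reference derives it from Slater's condition implying strong duality; both are standard and valid, and your closing remark that the downstream application only needs sufficiency (plus Slater, which the paper verifies for $\mathbb{P}_5$) is exactly right.
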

\begin{proof}
    Please refer to chapters 5.2.3 and 5.5.3 of \cite{boyd2004convex}.
\end{proof}
It is easy to verify $\mathbb{P}_5$ is a convex optimization problem with variables $(\bm{f},\bm{h},\mathcal{T}\hspace{.8pt})$ and satisfy Slater's condition\footnote{In brief, Slater's condition means that the feasible set of the optimization problem contains at least one interior point which makes all the inequality constraints \textit{strictly} hold~\cite{boyd2004convex}.}. One viable approach to solve it is to employ the KKT conditions directly.
% Since we only optimize $\rho_n$ and the accuracy function $A_n(\rho_n,S_n)$ is concave with it, constraint (\ref{constra:A}) could be simplified into $A_n^{-1}(P_n^{\textnormal{min}},S_n) \leq \rho_n$, where $A_n^{-1}(\cdot)$ denotes the inverse function of $A_n(\cdot)$. 
Then for $\mathbb{P}_3$, with $\bm{\alpha} := [\alpha_n|_{n \in \mathcal{N}}]$ 
and  $\bm{\beta} := [\beta_n|_{n \in \mathcal{N}}]$ 
denoting the multiplier, the Lagrangian function is given by:
\begin{align}
\mathcal{L}_1&(\bm{f},\bm{h},\mathcal{T},\bm{\alpha},\bm{\tau},\bm{\beta}) = 
    \omega_1\mathcal{T} +\omega_2\sum_{n=1}^N \kappa D_n(c_{1n}f_n^2+c_{2n}h_n^2) \nonumber \\
    &
    + \sum_{n=1}^{N}  \alpha_n (f_n-f_n^{\textnormal{max}})
    +
    \sum_{n=1}^{N} \tau_n(h_n-h_n^{\text{max}})
    \nonumber\\
    &+ \sum_{n=1}^{N} \beta_n (T_n^{\text{up}}+T_n^{\text{cmp}}(f_n)+ T_n^{S}(h_n)-\mathcal{T}\hspace{.8pt}).
    % [(\frac{R_l c_n D_n}{f_n}+\frac{R_l \rho_n d_{1n}+d_{2n}}{r_n})-\mathcal{T}]
\end{align}
The KKT conditions are as follows, where we abbreviate $\mathcal{L}_1(\bm{f},\bm{h},\mathcal{T},\bm{\alpha},\bm{\nu},\bm{\beta})$ as $\mathcal{L}_1$:

\textbf{Stationarity: }
\begin{subequations}
\begin{align}
&\frac{\partial \mathcal{L}_1}{\partial f_n} = 2 \omega_2  \kappa c_{1n} D_n f_n + \alpha_n  - \beta_n \frac{c_{1n} D_n}{f_n^2} = 0, \label{Lagrange:partial_f} \\
&\frac{\partial \mathcal{L}_1}{\partial h_n} = 2 \omega_2  \kappa c_{2n} D_n h_n + \tau_n  - \beta_n \frac{c_{2n} D_n}{h_n^2} = 0, \label{Lagrange:partial_h} \\
&\frac{\partial \mathcal{L}_1}{\partial \mathcal{T}} = \omega_1 - \sum_{n=1}^{N}\beta_n = 0. \label{Lagrange:partial_T} 
% &\alpha_n \cdot (T_n^{\text{up}} + T_n^{\text{cmp}} -\mathcal{T}\hspace{.8pt}) =0. \label{Lagrange2:alpha}
\end{align}
\end{subequations}

\textbf{Complementary Slackness: }
\begin{subequations}
\begin{align}
% & \alpha_n \cdot [P_n^{\textnormal{min}} -  A_n(\rho_n)] = 0,\\
& \alpha_n \cdot (f_n - f_n^{\text{max}}) = 0, \label{Lagrange:alpha} \\
&\tau_n\cdot(h_n - h_n^{\text{max}})=0,
\label{Lagrange:nu} \\
& \beta_n \cdot (T_n^{\text{up}} + T_n^{\text{cmp}} +T_n^{S} -\mathcal{T}\hspace{.8pt}) =0. \label{Lagrange:comp_beta}
% & \delta_n \cdot (f_n^{\min} - f_n) = 0, \label{deltancdot} \\
% & \theta_n \cdot (f_n - f_n^{\textnormal{max}}) = 0, \label{thetancdot}
\end{align}
\end{subequations}

\textbf{Primal feasibility: }(\ref{constra:f}), (\ref{constra:h_sum}), (\ref{constra:Told}).

\textbf{Dual feasibility: }
\begin{subequations}\label{Dualfeasibility}
\begin{align}
\text{(\ref{Dualfeasibility}a):~}
    \alpha_n  \geq 0,~~ 
    \text{(\ref{Dualfeasibility}b):~}
    \tau_n  \geq 0,~~ \text{(\ref{Dualfeasibility}c):~}
    \beta_n  \geq 0.\notag
\end{align}
\end{subequations}
To find the optimal solution $(\bm{f}^*,\bm{h}^*,\mathcal{T}^*)$ for $\mathbb{P}_3$, we analyze the KKT conditions step by step. 
% From conditions (\ref{Lagrange:partial_f}) and (\ref{Lagrange2:comp_beta}), it is obvious that $\beta_n >0$ and we have the following equation:
% \begin{align}
% T_n^{\text{up}}+ T_n^{\text{cmp}}(f_n)=\mathcal{T}. \label{betahat}   
% \end{align}
% Solve it, and we could compute $f_n$ as a function of $\mathcal{T}$:
% \begin{align}
%     f_n = \frac{c_nD_n}{\mathcal{T}-T_n^{\text{up}}}
% \end{align}
% Here we define $\overline{f}_n(\beta_n) = \sqrt[3]{\frac{\beta_n}{2\omega_2 \kappa}}$ by setting $\alpha_n=0$ and 
% we can derive the relationship between $\mathcal{T}$ and $\beta_n$ based on (\ref{Lagrange:partial_f}):
% \begin{align}
%     \beta_n = 2 \omega_2 \kappa (\frac{c_nD_n}{\mathcal{T}-T_n^{up}})^3.
%     \label{rela:beta_T}
% \end{align}
% Substituting (\ref{rela:beta_T}) in (\ref{Lagrange:partial_T}) and we derive that:
% \begin{align}
%      \sum_{n=1}^{N} 2 \omega_2 \kappa (\frac{c_nD_n}{\mathcal{T}-T_n^{up}})^3= \omega_1
% \end{align}
% A bisection method is sufficient to solve it and derive the optimal $\mathcal{T}^*$
We first compute $f_n$ and $h_n$ as functions of $\beta_n$ by setting $\alpha_n=0$ and $\nu_n=0$ :
\begin{align}
    \overline{f}_n(\beta_n) &= \sqrt[3]{\frac{\beta_n}{2\omega_2 \kappa}},\\
    \overline{h}_n(\beta_n) &= \sqrt[3]{\frac{\beta_n}{2\omega_2 \kappa}}.
\end{align}
We first analyze $f_n$ and discuss the following two cases:
\begin{itemize}
    \item \textbf{Case 1:} $\overline{f}_n(\beta_n) \leq f_n^{\textnormal{max}}$. In this case, we simply set $\alpha_n=0$ and $f_n^*=\overline{f}_n(\beta_n)$. 
    \item \textbf{Case 2:}
    $\overline{f}_n(\beta_n) > f_n^{\textnormal{max}}$. Setting $\alpha_n = 0$ is not feasible, as it leads to $f_n^*=\overline{f}_n(\beta_n)$ from (\ref{Lagrange:partial_f}) and violates the primal feasibility constraint (\ref{constra:f}). Therefore, with $\alpha_n > 0$ in Equation (\ref{Lagrange:alpha}), we conclude $f_n = f_n^{\textnormal{max}}$.
    % $\overline{f}_n(\beta_n) > f_n^{\textnormal{max}}$. It is obvious that we cannot set $\alpha_n=0$ in this case. Otherwise, based on (\ref{Lagrange:partial_f}) the solution will be $\overline{f}_n(\beta_n)$ and the primal feasibility (\ref{constra:f}) $p_n^* \leq p_n^{\textnormal{max}}$ will be violated. Using $\alpha_n >0$ in (\ref{Lagrange:alpha}), it holds that $f_n=f_n^{\textnormal{max}}$.
\end{itemize}
Summarize the above cases, and we have:
\begin{align}
    f_n^*(\beta_n) = \min \{\overline{f}_n(\beta_n),f_n^{\textnormal{max}}\}.\label{optimal:f}
\end{align}
The analysis of $h_n$ is similar to that of $f_n$. Hence, we could also have:
\begin{align}
    h_n^*(\beta_n) = \min \{\overline{h}_n(\beta_n),h_n^{\textnormal{max}}\}.\label{optimal:h}
\end{align}
According to (\ref{Lagrange:partial_f}), only $\beta_n > 0 $ could make the equality hold. Substituting $\beta_n > 0 $ in (\ref{Lagrange:comp_beta}), we could derive that:
\begin{align}
T_n^{\text{up}}+ T_n^{\text{cmp}}(f^*_n(\beta_n))+T_n^{S}(h_n^*(\beta_n))=\mathcal{T}. \label{betahat}   
\end{align}
Based on (\ref{betahat}), we could compute $\beta_n$ as a function of $\mathcal{T}$, denoted as $\beta_n(\mathcal{T})$.
Then, we substitute $\beta_n(\mathcal{T})$ into (\ref{Lagrange:partial_T}):
\begin{align}
    \sum_{n=1}^{N} \beta_n(\mathcal{T})= \omega_1.
    \label{optimal:Told}
\end{align}
It is easy to verify that the left part of (\ref{optimal:Told}) is monotonic with respect to $\mathcal{T}$, thus a bisection method is sufficient to solve it and derive the optimal $\mathcal{T}^*$. Subsequently, $\bm{f}^*$ and $\bm{h}^*$ could also be derived. Thus, the optimal solution to $\mathbb{P}_3$ is as follows:
\begin{align}\label{optimal:rho,f,T}
\begin{cases}
    \mathcal{T}^* = \textit{Solution to (\ref{optimal:Told})}, \\
    {\rho}_n^* = \max\{\overline{\rho}_n(p_n,B_n) ,\,\rho_n^{\textnormal{min}}\},\\
    {f}^*_n= \min \{\overline{f}_n(\beta_n^*),f_n^{\textnormal{max}}\},\\
    {h}^*_n= \min \{\overline{h}_n(\beta_n^*),h_n^{\textnormal{max}}\}.
\end{cases}
\end{align}

\subsection{Solution to SUB2}\label{section:SUB2}

Problem $\mathbb{P}_4$ represents a typical sum-of-ratios problem. Given the challenges in analyzing the pseudoconvexity of the objective function $\sum_{n=1}^{N}E_n^{\text{up}}$ in $\mathbb{P}_4$, we will utilize a more efficient fractional programming method to solve it.
Initially, we use 
$r_n(p_n,B_n)$ to denote $B_n\log_2(1+\frac{p_ng_n}{N_0B_n})$ for simplicity and constraint (\ref{constra:Told}) could be rewritten as $r_n^{\textnormal{min}} \leq r_n(p_n,B_n)$, where $r_n^{\textnormal{min}} = \frac{\mathcal{T}^*-T_n^{\text{cmp}}-T_n^{S}}{\rho_nd_n}$, and constraint (\ref{constra:A}) could be simplified as ${S}_n^{\textnormal{min}} \leq \frac{p_ng_n}{N_0B_n}$, where ${S}_n^{\textnormal{min}}= A_n^{-1}(P_n^{\textnormal{min}},\rho_n^*)$. 

Furthermore, we introduce an auxiliary variable $\delta_n$ to transform problem $\mathbb{P}_4$ into an epigraph form. Through introducing an additional constraint $ \frac{p_n\rho_nd_nD_n}{r_n(p_n,B_n)} \leq \delta_n$, we can convert  $\mathbb{P}_4$ into the equivalent $\mathbb{P}_6$:
\begin{subequations}\label{Subproblem2_V1}
\begin{align}    \mathbb{P}_6(\boldsymbol{f},\bm{h},\boldsymbol{\rho}, \mathcal{T}\hspace{.8pt}):~
    &\min_{\boldsymbol{p},\boldsymbol{B},\boldsymbol{\delta}}~
     \left\{\omega_2 \sum_{n=1}^{N} \delta_n \right\},
    \tag{\ref{Subproblem2_V1}} \\
    \text{s.t.}:~ 
    & \text{(\ref{constra:p}),   (\ref{constra:B_sum})},  \nonumber \\
    & r_n^{\textnormal{min}}\! -\! r_n(p_n,B_n) \!\leq\! 0, \label{constra:r} \\
    & N_0B_nS_n^{\textnormal{min}} - p_ng_n \leq 0,\label{constra:SNR} \\
    & p_n \rho_n d_nD_n - \delta_n{r_n(p_n,B_n)} \leq 0. \label{constra:beta}
\end{align}
\end{subequations} 
However, $\mathbb{P}_6$ is still a  non-convex problem since $\delta_n{r_n(p_n,B_n)}$ in (\ref{constra:beta}) is not jointly convex. 
% The Hessian matrix for $\delta_n{r_n(p_n,B_n)}$ is not positive semidefinite (also not negative semidefinite).
Thus, we manage to further transform problem $\mathbb{P}_6$ into a convex 
problem $\mathbb{P}_7$ through the following theorem.

\begin{theorem}\label{theorem:fp method}
If $(\bm{p}^*,\bm{B}^*,\bm{\xi}^*)$ is the globally optimal solution of $\mathbb{P}_6$, then there exist $(\bm{\gamma}^*,\bm{\delta}^*)$ such that $(\bm{p}^*,\bm{B}^*)$ is a solution of the following problem for $\bm{\gamma} = \bm{\gamma}^*$ and $\bm{\delta} = \bm{\delta}^*$.
\begin{subequations}
\label{SP2_V2}
\begin{align}
\mathbb{P}_7(\boldsymbol{f},\bm{h},\boldsymbol{\rho}, \mathcal{T}):\min_{\boldsymbol{p},\boldsymbol{B}}&\left\{\!
\sum_{n=1}^{N} \!\gamma_n \!\cdot\! \big(p_n\rho_n d_nD_n \!-\! \delta_n r_n(p_n,B_n)\! \big) \!\right\}, \tag{\ref{SP2_V2}} \\
\textnormal{s.t.}:~ 
& \textnormal{(\ref{constra:p}),   (\ref{constra:B_sum}), (\ref{constra:r}), (\ref{constra:SNR})}. \nonumber
\end{align}
\end{subequations}
Furthermore, $(\bm{p}^*,\bm{B}^*)$ also satisfies the following equations under the conditions of $\bm{\gamma} = \bm{\gamma}^*$ and $\bm{\delta} = \bm{\delta}^*$:
\begin{subequations}
\begin{align}
    &\gamma_n^* = \frac{\omega_2}{r_n(p_n^*,B_n^*)},\\
    and~&\delta_n^* = \frac{p_n^*\rho_nd_nD_n}{r_n(p_n^*,B_n^*)}.
\end{align}
\end{subequations}
\end{theorem}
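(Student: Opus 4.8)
The plan is to recognize $\mathbb{P}_6$ as the epigraph reformulation of the sum-of-ratios objective $\omega_2 \sum_{n} \frac{p_n\rho_n d_n D_n}{r_n(p_n,B_n)}$, and to read off both claimed identities directly from its KKT system, then to certify that $(\bm{p}^*,\bm{B}^*)$ is in fact globally optimal for the \emph{convex} problem $\mathbb{P}_7$. I assume throughout that $(\bm{p}^*,\bm{B}^*,\bm{\delta}^*)$ is the stated global optimum of $\mathbb{P}_6$ and that a constraint qualification holds there, so that this optimum is a KKT point.

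First I would argue that constraint~(\ref{constra:beta}) is active at the optimum for every $n$. The objective $\omega_2\sum_n \delta_n$ is strictly increasing in each $\delta_n$, and $\delta_n$ appears in no constraint other than~(\ref{constra:beta}); hence if~(\ref{constra:beta}) were slack for some $n$ one could strictly decrease $\delta_n$ while preserving feasibility, contradicting optimality. Therefore $p_n^*\rho_n d_n D_n = \delta_n^*\, r_n(p_n^*,B_n^*)$, and since $r_n>0$ this gives exactly $\delta_n^* = \frac{p_n^*\rho_n d_n D_n}{r_n(p_n^*,B_n^*)}$. Next I would form the Lagrangian of $\mathbb{P}_6$, letting $\gamma_n$ be the multiplier attached to~(\ref{constra:beta}) and collecting the remaining multipliers for the shared constraints~(\ref{constra:p}),~(\ref{constra:B_sum}),~(\ref{constra:r}),~(\ref{constra:SNR}). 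The stationarity condition $\partial L/\partial \delta_n = 0$ reads $\omega_2 - \gamma_n\, r_n(p_n^*,B_n^*) = 0$, which yields immediately $\gamma_n^* = \frac{\omega_2}{r_n(p_n^*,B_n^*)}$ and confirms $\gamma_n^*>0$.

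The crux is then the decoupling of the stationarity conditions in $p_n$ and $B_n$. In $\mathbb{P}_6$ these conditions involve the gradient of $\gamma_n\big(p_n\rho_n d_n D_n - \delta_n r_n(p_n,B_n)\big)$ together with the gradients of the shared constraints. Once $\gamma_n$ and $\delta_n$ are frozen at $\gamma_n^*$ and $\delta_n^*$, these are precisely the stationarity conditions generated by $\mathbb{P}_7$ with the same multipliers on the shared constraints, because in $\mathbb{P}_7$ the variable $\delta_n$ has become a fixed parameter and constraint~(\ref{constra:beta}) has been removed. Thus $(\bm{p}^*,\bm{B}^*)$, carrying over the shared-constraint multipliers from $\mathbb{P}_6$, satisfies the KKT conditions of $\mathbb{P}_7$. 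Finally I would verify convexity of $\mathbb{P}_7$ at the frozen parameters: the objective is a positive-weighted sum of the affine term $p_n\rho_n d_n D_n$ and the convex term $-\delta_n^* r_n(p_n,B_n)$ (convex since $r_n$ is jointly concave by Lemma~\ref{lemma1} and $\delta_n^*>0$), and the constraints~(\ref{constra:p}),~(\ref{constra:B_sum}),~(\ref{constra:r}),~(\ref{constra:SNR}) are convex. By Lemma~\ref{lemma:KKT} the KKT point is therefore globally optimal for $\mathbb{P}_7$, which closes the argument.

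The main obstacle I anticipate lies not in the algebra but in the justification that $\mathbb{P}_6$, being non-convex, admits a valid KKT characterization at its global optimum: this requires explicitly invoking a suitable constraint qualification so that the multipliers $(\bm{\gamma}^*,\bm{\delta}^*)$ are guaranteed to exist. The second delicate point is the bookkeeping that transfers the shared-constraint multipliers consistently between $\mathbb{P}_6$ and $\mathbb{P}_7$; the reason this works cleanly is exactly that $\delta_n$ is a free optimization variable in $\mathbb{P}_6$ but a fixed constant in $\mathbb{P}_7$, so the only stationarity equation lost in the passage — the one in $\delta_n$ — is precisely the equation that pins down $\gamma_n^*$, leaving the $p_n,B_n$ conditions untouched.
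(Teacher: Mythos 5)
Your argument is correct, but it takes a different route from the paper: the paper's entire proof of Theorem~\ref{theorem:fp method} is a one-line citation of Lemma~2.1 in \cite{jong2012efficient}, whereas you reconstruct that lemma from scratch for this particular instance. Your reconstruction is sound at every step: the activity of constraint~(\ref{constra:beta}) at the optimum (since the objective is strictly increasing in each $\delta_n$ and $\delta_n$ appears in no other constraint, with $r_n>0$) yields $\delta_n^*=\frac{p_n^*\rho_n d_nD_n}{r_n(p_n^*,B_n^*)}$; stationarity of the $\mathbb{P}_6$ Lagrangian in $\delta_n$ yields $\gamma_n^*=\frac{\omega_2}{r_n(p_n^*,B_n^*)}$; the stationarity conditions in $(p_n,B_n)$ transfer verbatim to $\mathbb{P}_7$ with the same multipliers on the shared constraints because the only equation lost in freezing $\delta_n$ is the one that determined $\gamma_n^*$; and since $\mathbb{P}_7$ is convex (affine term plus $-\delta_n^* r_n$ with $\delta_n^*\geq 0$ and $r_n$ jointly concave by Lemma~\ref{lemma1}, over convex constraints), Lemma~\ref{lemma:KKT} upgrades the KKT point to a global minimizer. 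What your approach buys is transparency and a self-contained derivation of exactly where the formulas for $\gamma_n^*$ and $\delta_n^*$ come from; what the citation buys the paper is that \cite{jong2012efficient} handles the one point you rightly flag as delicate, namely the existence of KKT multipliers at the global optimum of the \emph{non-convex} problem $\mathbb{P}_6$. You state this as an assumption ("a constraint qualification holds there"); to fully close your version you would either verify such a qualification for $\mathbb{P}_6$ (note that (\ref{constra:p}), (\ref{constra:B_sum}), (\ref{constra:SNR}) are linear and (\ref{constra:r}) is convex, so only the non-convex constraint (\ref{constra:beta}) needs care) or fall back on the cited lemma for that single existence step. This is not a flaw in your logic, just the one point where your proof is conditional while the paper outsources the whole argument.
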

\begin{proof}
    The proof is given by Lemma $2.1$ in \cite{jong2012efficient}.
\end{proof}

\textbf{Theorem \ref{theorem:fp method}} indicates that problem $\mathbb{P}_7$ has the same optimal solution with $\mathbb{P}_6$.
Intuitively, we next focus on how to derive the optimal $(\bm{\delta}^*,\bm{\gamma}^*)$ to solve $\mathbb{P}_7$. We employ an alternating optimization approach to optimize the two sets of variables $(\bm{\delta},\bm{\gamma})$ and $(\bm{p},\bm{B})$, which is described in \textbf{Algorithm \ref{algo:gammadelta}}.

Algorithm \ref{algo:gammadelta} starts by calculating the initial values of auxiliary variables $(\bm{\delta}^{(0)},\bm{\gamma}^{(0)})$ based on (\ref{initial:gamma,delta}). Then, during the $i$-th iteration of Algorithm \ref{algo:gammadelta}, $(\bm{\delta}^{(i)},\bm{\gamma}^{(i)})$ are updated to $(\bm{\delta}^{(i+1)},\bm{\gamma}^{(i+1)})$ based on (\ref{Algo2:integer})-(\ref{algo:update}), which states the modified Newton method to solve $\bm{\phi}(\bm{\delta},\bm{\gamma})=0$ until convergence or reaching maximum iteration number $I$.

\begin{algorithm}
\footnotesize
\caption{Refinement of Parameters $(\bm{p},\bm{B})$}
\KwIn{Initial feasible solution $(\bm{p}^{(0)}, \bm{B}^{(0)})$;
\newline
Iteration counter $i=0$, maximum iteration number $I$; 
\newline
Error parameters $\xi \in (0,1)$, $\epsilon \in (0,1)$.}

\KwOut{Optimized $(\bm{p}^{*}, \bm{B}^{*})$.}
\label{algo:gammadelta}

\SetKwProg{MyFunction}{Function}{:}{end}
\MyFunction{Optimization\_of\_Power\_and\_Bandwidth $(\bm{p}, \bm{B}, I, \xi, \epsilon)$ }{
Compute initial values of auxiliary variables $\bm{\gamma},\bm{\delta}$:
\begin{align}
    \bm{\gamma}^{(0)} &=[\gamma_n^{(0)}|_{n \in \mathcal{N}}],~
    \bm{\delta}^{(0)} =[\delta_n^{(0)}|_{n \in \mathcal{N}}];\\
     \hspace{-10pt}where~ \delta_n^{(0)} &= \frac{p_n^{(0)}\rho_nd_nD_n}{r_n(p_n^{(0)},B_n^{(0)})},~
     \gamma_n^{(0)} = \frac{\omega_2}{r_n(p_n^{(0)},B_n^{(0)})}. \label{initial:gamma,delta}
\end{align}

\While{$i \leq I$ and not convergence}{
Solve $\mathbb{P}_7$ with given $(\bm{\delta}^{(i)},\bm{\gamma}^{(i)})$ based on (\ref{Optimal:p,B}), and obtain a new solution $(\bm{p}^{(i+1)},\bm{B}^{(i+1)})$.

% \tcp{\textit{Solve problems SUB2\_V2 based on the KKT analysis on page \pageref{Optimal:p,B} }}

Compute $\bm{\phi}(\bm{\delta}^{(i)},\bm{\gamma}^{(i)})$ as follows:
\begin{subequations}
\begin{align}
    &\bm{\phi}(\bm{\delta}^{(i)},\bm{\gamma}^{(i)})={[\bm{\phi_1}(\bm{\delta}^{(i)}),\bm{\phi_2}(\bm{\gamma}^{(i)})]}^{T} \in \mathbb{R}^{2N}, \\
    &\bm{\phi_1}(\bm{\delta}^{(i)})=[{\phi_{1,n}(\bm{\delta}^{(i)})}|_{n \in \mathcal{N}}], \\
    &\bm{\phi_2}(\bm{\gamma}^{(i)})=[{\phi_{2,n}(\bm{\gamma}^{(i)})}|_{n \in \mathcal{N}}],\\
    &\phi_{1,n}(\bm{\delta}^{(i)})\!=\!-\!p_n^{(i+1)}\rho_n d_nD_n\!+\!\delta_n^{(i)}r_n(p_n^{(i+1)},B_n^{(i+1)}),\\
    &\phi_{2,n}(\bm{\gamma}^{(i)})=-\omega_2 + \gamma_n^{(i)}r_n(p_n^{(i+1)},B_n^{(i+1)}).
\end{align}
\end{subequations}

If $\bm{\phi}(\bm{\delta}^{(i)},\bm{\gamma}^{(i)})$ is approximately a zero vector, then the algorithm terminates, and $(\bm{p}^{(i+1)},\bm{B}^{(i+1)})$ is the global optimum $(\bm{p}^{*}, \bm{B}^{*})$ to $\mathbb{P}_4$.

\tcp{\textit{The algorithm will converge when $\bm{\phi}(\bm{\delta}^{(i)},\bm{\gamma}^{(i)})=0$ and the proof is given by Theorem 3.1 in \cite{jong2012efficient}}}

Otherwise, find the smallest integer $j$ satisfying:
\begin{align}\label{Algo2:integer}
    &\parallel \bm{\phi}(\bm{\delta}^{(i)}+\xi^{j} \bm{\sigma_1}^{(i)},\bm{\gamma}^{(i)}+\xi^{j} \bm{\sigma_2}^{(i)})\parallel_2 \notag \\ &\leq (1-\epsilon\xi^{{j}})\parallel \bm{\phi}(\bm{\delta}^{(i)},\bm{\gamma}^{(i)}) \parallel_2,
\end{align}
where $\bm{\sigma_1}^{(i)}$ and $\bm{\sigma_2}^{(i)}$ are computed as
\begin{align}
    \bm{\sigma_1}^{(i)} &= -[\phi_1^{\prime}(\bm{\delta}^{(i)})]^{-1}\phi_1(\bm{\delta}^{(i)}), \\
    \bm{\sigma_2}^{(i)} &= -[\phi_2^{\prime}(\bm{\gamma}^{(i)})]^{-1}\phi_2(\bm{\gamma}^{(i)}).
\end{align}

\tcp{\textit{Here we utilize a modified Newton method, and if $j$ happens to be 0, it becomes a standard Newton Method}}

Update auxiliary variables $(\bm{\gamma},\bm{\delta})$:
\begin{align}\label{algo:update}
\hspace{-5pt}
(\bm{\delta}^{(i+1)},\bm{\gamma}^{(i+1)}) \! \leftarrow \! (\bm{\delta}^{(i)}\!+\!\xi^{j}\bm{\sigma_1}^{(i)},\bm{\gamma}^{(i)}\!+\!\xi^{j}\bm{\sigma_2}^{(i)}).
\end{align}

Let $i \leftarrow i+1$.
}
Utilize the optimized $(\bm{\delta}^{*},\bm{\gamma}^{*})$ to derive the optimal $(\bm{p}^*, \bm{B}^*)$.
}
\end{algorithm}

Next, our attention turns to solving $\mathbb{P}_7$ when  $\bm{\gamma}$ and $\bm{\delta}$ are obtained from Algorithm \ref{algo:gammadelta}. Noting that $\mathbb{P}_7$ is inherently a convex problem, KKT conditions can be utilized to solve it.
Specifically, the Lagrange function of (\ref{SP2_V2}) could be given by:
\begin{align}
    & \mathcal{L}_2 (\bm{p},\bm{B},\zeta,\bm{\eta},\bm{\nu},\bm{\iota}) = \sum_{n=1}^{N} \gamma_n\cdot\big(p_n \rho_n d_nD_n - \delta_n r_n(p_n,B_n) \big) \nonumber \\
    &+ \zeta\cdot(\sum_{n=1}^{N}B_n - B_{\textnormal{total}}) 
   + \sum_{n=1}^{N}  \eta_n \cdot (r_n^{\textnormal{min}} - r_n(p_n,B_n)) \nonumber \\
   & + \sum_{n=1}^{N}\nu_n \cdot (N_0B_nS_n^{\textnormal{min}}-p_ng_n) \!+\! \sum_{n=1}^{N} \iota_n \cdot (p_n-p_n^{\textnormal{max}}).
   \label{LagrangeSP2}
\end{align}
where $\zeta$, $\bm{\eta}$, $\bm{\iota}$ and $\bm{\nu}$ are non-negative Lagrangian multipliers.
The corresponding KKT conditions as follows:

\textbf{Stationarity: }
\begin{subequations}
\begin{align}
&\frac{\partial \mathcal{L}_2}{\partial p_n}\! = \!\gamma_n\rho_n d_nD_n \!-\!  \frac{(\gamma_n \delta_n+\eta_n) g_n}{N_0(1+S_n)\ln{2}} \!-\!\nu_ng_n \!+\! \iota_n= 0, \label{Lagrange_SP2:partial_p} \\
&\frac{\partial \mathcal{L}_2}{\partial B_n} = -(\gamma_n \delta_n+\eta_n)\log_2(1+S_n)+\frac{(\gamma_n \delta_n+\eta_n)S_n}{(1+S_n)\ln{2}} \nonumber \\
&~~~~~~~~~+ \zeta + \nu_nN_0S_n^{\textnormal{min}}= 0. \label{Lagrange_SP2:partial_B} 
\end{align}
\end{subequations}
where $S_n=\frac{p_ng_n}{N_0B_n}$ for simplicity.

\textbf{Complementary Slackness:}
\begin{subequations}
\begin{align}
&\zeta \cdot(\sum_{n=1}^{N}B_n - B) = 0, 
\label{comP:B}\\ &\eta_n\cdot(r_n^{\textnormal{min}} - r_n(p_n,B_n)) = 0, \label{comp:rn} \\
&\nu_n (N_0B_nS_n^{\textnormal{min}}-p_ng_n)
=0,\label{comp:SNR}\\
&\iota_n (p_n-p_n^{\textnormal{max}})=0.
\label{comp:p}
\end{align}
\end{subequations}

\textbf{Primal feasibility: } \text{(\ref{constra:p}),   (\ref{constra:B_sum}),  (\ref{constra:r}), (\ref{constra:SNR})}.

\textbf{Dual feasibility: }
\begin{subequations}\label{Dualfeasibility_SP2}
\begin{align}
&\text{(\ref{Dualfeasibility_SP2}a)}: \zeta \geq 0,~~
\text{(\ref{Dualfeasibility_SP2}b)}: \eta_n \geq 0, 
\nonumber\\
&\text{(\ref{Dualfeasibility_SP2}c)}: \nu_n \geq 0,~\text{(\ref{Dualfeasibility_SP2}d)}: \iota_n \geq 0. \nonumber
\end{align}
\end{subequations}

\textbf{Roadmap: } Thereafter, we identify a roadmap to find a set of solution $(\bm{p}^*,\bm{B}^*)$ by analyzing the Lagrangian multipliers with respect to the corresponding constraints.  

\textbf{Step 1.}
We first compute $\overline{p}_n$ as a function of $(B_n,\eta_n,\nu_n)$ by setting $\iota_n=0$ from (\ref{Lagrange_SP2:partial_p}), which is denoted as:
\begin{align}
 \overline{p}_n(B_n,\eta_n,\nu_n ) \!= \!
 \frac{(\gamma_n \delta_n+\eta_n) B_n}{(\gamma_n\rho_nd_nD_n\!-\!\nu_ng_n)\ln2}\!-\frac{N_0B_n}{g_n}.
\end{align}
% \begin{align}
%  \overline{p}_n(B_n,\eta_n,\nu_n ) \!= \!
%  \big(\frac{(\gamma_n \delta_n+\eta_n) g_n}{(\gamma_n\rho_nd_n\!-\!\nu_ng_n)N_0\ln2}\!-1\!\big)\frac{N_0B_n}{g_n}.
% \end{align}
Then, we discuss the following two cases to identify the optimal $p_n^*(B_n,\eta_n,\nu_n)$:
\begin{itemize}
    \item \textbf{Case 1:} $\overline{p}_n(B_n,\eta_n,\nu_n) \leq p_n^{\textnormal{max}}$. In this case, we simply set $\iota_n=0$ and the optimal $p_n^*(B_n,\eta_n,\nu_n)=\overline{p}_n(B_n,\eta_n,\nu_n)$. All conditions can be satisfied.
    \item \textbf{Case 2:}
    $\overline{p}_n(B_n,\eta_n,\nu_n) > p_n^{\textnormal{max}}$. 
Clearly, setting $\iota_n=0$ is not viable in this scenario, as it would lead to a violation of the primal feasibility condition (\ref{constra:p}). Using $\iota_n >0$ in (\ref{comp:p}), it always holds that $p_n^*=p_n^{\textnormal{max}}$. 
\end{itemize}
Summarize both cases and we have the following conclusion:
\begin{align}
    p_n^*(B_n,\eta_n,\nu_n) = \min\{\overline{p}_n(B_n,\eta_n,\nu_n),p_n^{\textnormal{max}}\}.
    \label{optimal,p}
\end{align}

\textbf{Step 2:}
Substituting (\ref{optimal,p}) into (\ref{comp:SNR}) and setting $\nu_n=0$, we could compute SNR $S_n$ as a function of $(B_n,\eta_n)$ :
\begin{align}
    \overline{S}_n(B_n,\eta_n) \!=\! \min\{\frac{(\gamma_n \delta_n+\eta_n) g_n}{(\gamma_n\rho_nd_nD_n)N_0\ln2}-1,\frac{p_n^{\textnormal{max}}g_n}{N_0B_n}\}.
\end{align}

Then, we discuss the following two cases:
\begin{itemize}
    \item \textbf{Case 1:} $\overline{S}_n(B_n,\eta_n) \geq S_n^{\textnormal{min}}$. Similarly, we could set $\nu_n=0$ and the optimal $S_n^*=\overline{S}_n(B_n,\eta_n)$. 
    \item \textbf{Case 2:}
    $\overline{S}_n(B_n,\eta_n) < S_n^{\textnormal{min}}$. Setting  $\nu_n=0$ in this case will violate primal feasibility (\ref{constra:SNR}). Thereafter, we substitute  $\nu_n >0$ in (\ref{comp:SNR}), it always holds that $S_n^*=S_n^{\textnormal{min}}$ and we can derive that $\overline{\nu}_n(B_n,\eta_n) = \frac{\gamma_n\rho_nd_n}{g_n}-\frac{(\gamma_n\delta_n+\eta_n)g_n}{(\min\{S_n^{\textnormal{min}},\frac{p_n^{\textnormal{max}}g_n}{N_0B_n}\}+1)N_0 \ln{2}}$.
\end{itemize}
Summarize both cases, and we have the following conclusion:
\begin{align}
    S_n^*(B_n,\eta_n) = \max\{\overline{S}_n(B_n,\eta_n),S_n^{\textnormal{min}}\}.
    \label{optimal,S}
\end{align}
Combining (\ref{comp:SNR}) and (\ref{optimal,S}), we can also derive the optimal $\nu_n^*(B_n,\eta_n)$ as follows:
\begin{align}\label{optimal:nu}
\nu_n^*(B_n,\eta_n) = 
\begin{cases}
    0 &\text{if $\overline{S}_n(B_n,\eta_n) \geq S_n^{\textnormal{min}}$},\\
\overline{\nu}_n(B_n,\eta_n)&\text{otherwise}.
\end{cases}
\end{align}
% \begin{align}
% &\nu_n^*(B_n,\eta_n) \text{ equals }
% 0~\text{if}~\overline{S}_n(B_n,\eta_n) \geq S_n^{\textnormal{min}}, \text{ and } \nonumber\\
% &\overline{S}_n(B_n,\eta_n)~\text{otherwise}.
%     % \nu_n^*(B_n,\eta_n) &=
%     % 0 \cdot \mathds{1}[\overline{S}_n(B_n,\eta_n) \geq S_n^{\textnormal{min}}] \nonumber \\
%     % &+ \overline{\nu}_n(B_n,\eta_n)\cdot\mathds{1} [\overline{S}_n(B_n,\eta_n) < S_n^{\textnormal{min}}],
%     \label{optimal:nu}
% \end{align}

\textbf{Step 3.} Then, we analyze the multiplier $\bm{\eta}$. Substituting (\ref{optimal,S}) into (\ref{comp:rn}), we could derive $\overline{r}_n(B_n)$ by setting $\eta_n=0$:
\begin{align}
\overline{r}_n(B_n) = B_n\log_2(1+
    S_n^*(B_n,\eta_n)\mid_{\eta_n=0}),
\end{align}
We discuss the following two cases:
\begin{itemize}
    \item \textbf{Case 1:} $\overline{r}_n(B_n) \geq r_n^{\textnormal{min}}$. We set $\eta_n=0$ and the optimal $r_n^*(B_n)=\overline{r}_n(B_n)$. 
    \item \textbf{Case 2:}
    $\overline{r}_n(B_n) < r_n^{\textnormal{min}}$. Setting  $\eta_n=0$ will violate primal feasibility (\ref{constra:r}). Using $\eta_n >0$ in (\ref{comp:rn}), it always holds that $r_n^*(B_n,\eta_n)=r_n^{\textnormal{min}}$. Solve the above equation and obtain the optimal solution denoted as $\overline{\eta}_n(B_n)$.
\end{itemize}
Summarize both cases and we have the following conclusion:
\begin{align}\label{optimal:eta}
\eta_n^*(B_n) = 
\begin{cases}
    0 &\text{if $\overline{r}_n(B_n) \!\geq r_n^{\textnormal{min}}$},\\
\overline{\eta}_n(B_n)&\text{otherwise}.
\end{cases}
\end{align}
% \begin{align}\label{optimal:nu}
% \nu_n^*(B_n,\eta_n) = 
% \begin{cases}
%     0 &\text{if $\overline{S}_n(B_n,\eta_n) \geq S_n^{\textnormal{min}}$},\\
% \overline{\nu}_n(B_n,\eta_n)&\text{otherwise}.
% \end{cases}
% \end{align}
% \begin{align}
% \eta_n^*(B_n) \text{ equals }
%     0~\text{if}~\overline{r}_n(B_n) \!\geq r_n^{\textnormal{min}}, \text{ and } \overline{\eta}_n(B_n)~\text{otherwise}.
%     % \eta_n^*(B_n) \!=\!
%     % 0 \! \cdot \!\mathds{1}[\overline{r}_n(B_n) \!\geq r_n^{\textnormal{min}}] \!+ \!\overline{\eta}_n(B_n)
%     % \! \cdot \!\mathds{1}[\overline{r}_n(B_n)\! < \! r_n^{\textnormal{min}}],
%     \label{optimal:eta}
% \end{align}

\textbf{Step 4.} Substituting (\ref{optimal,p}), (\ref{optimal:nu}), (\ref{optimal:eta}) into stationarity (\ref{Lagrange_SP2:partial_B}), the following equation could be derived:
\begin{align}
    &-(\gamma_n \delta_n+\eta_n^*(B_n))\log_2(1+S_n^*(B_n))+
    \nonumber \\
    &\frac{(\gamma_n \delta_n+\eta_n^*(B_n))S_n^*(B_n)}{(1+S_n^*(B_n))\ln{2}} 
    + \zeta + \nu_n^*(B_n)N_0S_n^{\textnormal{min}}= 0. \label{optimal:B}
\end{align}
From (\ref{optimal:B}) we could derive the solution denoted as $\overline{B}_n(\zeta)$. We calculate $\sum_{n=1}^{N}\overline{B}_n(0)$ with $\zeta=0$, and then considering:
\begin{itemize}
    \item \textbf{Case 1:} $\sum_{n=1}^{N}\overline{B}_n(0) \leq B_{\textnormal{total}}$. Set $\zeta=0$ and the optimal $B_n^*=B_n(0)$. 
    \item \textbf{Case 2:}
    $\sum_{n=1}^{N}\overline{B}_n(0) > B_{\textnormal{total}}$. Using $\zeta >0$ in (\ref{comP:B}) and we have $\sum_{n=1}^{N}\overline{B}_n(\zeta) = B_{\textnormal{total}}$. A bisection method could be applied to solve the equation and find the solution $\overline{\zeta}$.
\end{itemize}
Summarize both cases and we have:
\begin{align}\label{optimal:zeta}
\zeta^* = 
\begin{cases}
    0 &\text{if $\sum_{n=1}^{N}\!\overline{B}_n(0) \!\leq \! B_{\textnormal{total}}$},\\
\overline{\zeta}&\text{otherwise}.
\end{cases}
\end{align}
% \begin{align}
%     \zeta^* \text{ equals }
%     0~\text{if}~\textstyle{\sum_{n=1}^{N}\!\overline{B}_n(0) \!\leq \! B_{\textnormal{total}}}, \text{ and } \overline{\zeta}~\text{otherwise}.
%     \label{optimal:zeta}
% \end{align}
Until now, we have analyzed all the variables, and the optimal solution $(\bm{p}^*,\bm{B}^*)$ could be expressed as:
\begin{align}\label{Optimal:p,B}
\begin{cases}
    {B}_n^* = B_n(\zeta^*),\\
    {p}^*_n= \min\{\overline{p}_n(B_n^*,\eta_n^*,\nu_n^*),p_n^{\textnormal{max}}\}. 
\end{cases}
\end{align}
\subsection{Resource Allocation Algorithm}

Based on the analysis in Sections \ref{section:SUB1} and \ref{section:SUB2}, we have solved Problems $\mathbb{P}_3$ and $\mathbb{P}_4$ respectively. Thereafter, we give the resource allocation algorithm, as shown in Algorithm \ref{algo:resourceallocation} below, to iteratively solve $\mathbb{P}_3$ and $\mathbb{P}_4$ until convergence.

% \begin{algorithm}
% \footnotesize
% \caption{Refinement of Parameters $(\bm{p},\bm{B})$}
% \KwIn{Initial feasible solution $(\bm{p}^{(0)}, \bm{B}^{(0)})$;
% \newline
% Iteration counter $i=0$, maximum iteration number $I$; 
% \newline
% Error parameters $\xi \in (0,1)$, $\epsilon \in (0,1)$.}

% \KwOut{Optimized $(\bm{p}^{*}, \bm{B}^{*})$.}
% \label{algo:gammadelta}

% \SetKwProg{MyFunction}{Function}{:}{end}
% \MyFunction{\underline{Optimization\_of\_Power\_and\_Bandwidth $(\bm{p}, \bm{B})$ }}{

\begin{algorithm}
\footnotesize
\caption{Resource Allocation Algorithm} 
\label{algo:resourceallocation}
\KwIn{Initial solution $sol^{(0)}=(\bm{p}^{(0)},\bm{B}^{(0)},\bm{f}^{(0)},\bm{h}^{(0)},\bm{\rho}^{(0)})$;
\newline
Maximum iteration number $K$; 
\newline
Error parameter $\epsilon \in (0,1)$.}

\KwOut{Optimal solution $sol^{*}=(\bm{p}^{*}, \bm{B}^{*},\bm{f}^{*},\bm{h}^{*},\bm{\rho}^{*})$.}

\SetKwProg{MyFunction}{Function}{:}{end}

\MyFunction{Joint\_Optimization $(sol^{(0)},K,\epsilon)$}{

Set iteration counter $k=0$.

\While{$k \leq K$ and not convergence}{

Solve $\mathbb{P}_3$ through (\ref{optimal:rho,f,T}) on page \pageref{optimal:rho,f,T} with $(\bm{p}^{(k)},\bm{B}^{(k)})$, and obtain $(\bm{f}^{(k+1)},\bm{h}^{(k+1)},\bm{\rho}^{(k+1)},\mathcal{T}^{(k+1)})$.

Solve $\mathbb{P}_4$ by calling Algorithm \ref{algo:gammadelta} on page \pageref{algo:gammadelta} with $(\bm{f}^{(k+1)},\bm{h}^{(k+1)},\bm{\rho}^{(k+1)},\mathcal{T}^{(k+1)})$ to obtain $(\bm{p}^{(k+1)},\bm{B}^{(k+1)})$. 

% \tcp{\textit{Solve problems $\mathbb{P}_3$ and $\mathbb{P}_4$ alternatively until convergence}}

Update the solution:
\begin{align}
sol^{(k+1)} \leftarrow (\bm{p}^{(k+1)},\bm{B}^{(k+1)},\bm{f}^{(k+1)},\bm{h}^{(k+1)},\bm{\rho}^{(k+1)}). \nonumber
\end{align}

If $|sol^{(k+1)}-sol^{(k)}| \leq \epsilon $, the algorithm terminates.

Let $k \leftarrow k+1$
}
Return the optimal $sol^{*}=(\bm{p}^{*}, \bm{B}^{*},\bm{f}^{*},\bm{h}^{*},\bm{\rho}^{*})$
}
\end{algorithm}

\section{Time complexity, solution quality and convergence analysis}\label{Sec:time Complexity}

This section offers a detailed analysis of Algorithm \ref{algo:resourceallocation}, focusing on its time complexity, the quality of solutions, and its convergence 

\subsection{Time Complexity}
The bulk of the computational load in Algorithm \ref{algo:resourceallocation} resides in steps 4 and 5. Step 4 involves using the bisection method to solve $\mathbb{P}_3$ and to compute the parameters $\bm{f},\bm{h},\bm{\rho}$ and $\mathcal{T}$. Given that the bisection method's complexity is independent of $N$ and hinges only on the desired precision, step 3 incurs a complexity of $\mathcal{O}(4N)$. In step 5, Algorithm \ref{algo:gammadelta} iteratively solves $\mathbb{P}_7$ and updates auxiliary variables until the convergence, where the maximum iteration number is $I$.
The computation of $\bm{B}$ and $\bm{p}$, along with the updates to $\bm{\delta}$ and $\bm{\gamma}$, takes $\mathcal{O}(4N)$. Additional steps 5, 7, and 8 in Algorithm \ref{algo:gammadelta} incur complexities of $\mathcal{O}(2N)$, $\mathcal{O}((j+1)N)$, and $\mathcal{O}(2N)$, respectively. Considering a maximum iteration count of $I$, the overall complexity for Algorithm \ref{algo:gammadelta} is thus approximated as $\mathcal{O}\big(I(j+9)N\big)$. Accounting for the fact that Algorithm \ref{algo:resourceallocation} iteratively solves $\mathbb{P}_3$ and $\mathbb{P}_4$ with a maximum of $K$ iterations, we deduce that the total time complexity is $\mathcal{O}\big(K(Ij+9I+4)N\big)$.

\subsection{Solution Quality}

The proposed Algorithm \ref{algo:resourceallocation} iteratively optimizes two variable sets, $[\bm{f},\bm{h},\bm{\rho},\mathcal{T}]$ and $[\bm{p},\bm{B}]$, by solving problems $\mathbb{P}_3$ and $\mathbb{P}_4$ alternately. For $\mathbb{P}_3$, Lemma \ref{lemma:KKT} demonstrates that the KKT conditions are both necessary and sufficient for optimality due to the problem's convexity. This affirmation means that a globally optimal solution can be reliably obtained. Moreover, Algorithm \ref{algo:gammadelta} aims to solve $\mathbb{P}_4$, and based on Theorem \ref{theorem:fp method},
$\mathbb{P}_7$ is proven to have the same globally optimal solution as $\mathbb{P}_4$. Therefore, although the alternating optimization approach does not always assure a local or global optimum for \(\mathbb{P}_1\), the global optimality of the subproblems can be secured. The precision parameter \(\epsilon\) in Algorithm \ref{algo:resourceallocation} guarantees the solution's convergence to a desired level of accuracy.

\subsection{Convergence}
Convergence begins with the resolution of $\mathbb{P}_3$ using the KKT conditions within Algorithm \ref{algo:resourceallocation}. Given the convex nature of $\mathbb{P}_3$ when $\bm{p}$ and $\bm{B}$ are fixed, we ensure both the optimality and convergence, as dictated by Lemma \ref{lemma:KKT}.

The convergence of $\mathbb{P}_4$ solved by Algorithm \ref{algo:gammadelta} merits further discussion. Let $\bm{\upsilon} = (\bm{\delta},\bm{\gamma})$ for brevity. Drawing on Theorem~3.2 in \cite{jong2012efficient}, convergence is assured if:
\begin{enumerate}
\item The function $\bm{\phi}(\bm{\upsilon})$ is differentiable and meets the Lipschitz condition within the solution domain $\Omega$.
\item A constant $L > 0$ exists, such that $| \bm{\phi}'(\bm{\upsilon}_1) - \bm{\phi}'(\bm{\upsilon}_2) | \leq L | \bm{\upsilon}_1 - \bm{\upsilon}_2 |$ for all $\bm{\upsilon}_1, \bm{\upsilon}_2 \in \Omega$.
\item A constant $M > 0$ can be found for which $| [\bm{\phi}'(\bm{\upsilon})]^{-1} | \leq M$ holds true for all $\bm{\upsilon} \in \Omega$.
\end{enumerate}

Under these stipulations, the modified Newton method used in Algorithm \ref{algo:gammadelta} converges linearly to the optimal solution $\bm{\upsilon}^*$ from any starting point $\bm{\upsilon}_0 \in \Omega$, and this convergence becomes quadratic in the vicinity of $\bm{\upsilon}^*$. The function $\bm{\phi}(\bm{\upsilon})$ is linear and readily satisfies conditions 1)-3), as shown in \cite{jong2012efficient}. With $\bm{\upsilon}$ given, the KKT conditions are again employed to determine the optimal $(\bm{p}^{*},\bm{B}^{*})$. This analysis substantiates the convergence of Algorithm \ref{algo:gammadelta}.

\section{Simulation}\label{Sec:Simulation}
This section first introduces the adopted SemCom models, accuracy functions and parameter settings. 
Then, experiments are conducted to demonstrate our method's effectiveness.
\subsection{Semantic Communication Model}

We utilize a deep joint source and channel coding (JSCC) model, as outlined in~\cite{bourtsoulatze2019deep}, for our image semantic communication model. 
The deep JSCC architecture has been widely applied in semantic communication studies~\cite{xu2023deep,park2024joint,wu2023cddm}.
The model functions as an end-to-end autoencoder, consisting of two convolutional neural networks (CNNs) that serve as the encoder and decoder. These components are jointly trained with a non-trainable layer in the middle that simulates the noisy communication channel.

In particular, the encoder CNN consists of five convolutional layers followed by parametric ReLU (PReLU) activation functions and a normalization layer. This architecture extracts and combines image features from the original image to form the channel input samples for transmission. Conversely, the decoder CNN processes the received coded signal (possibly noised) through a series of transposed convolutional layers with non-linear activation functions, using these features to reconstruct an estimated image.

% The semantic encoder employs two convolutional layers with ReLU activations to extract semantic features, followed by max-pooling for spatial down-sampling, culminating in a compressed feature map. Then, a dense layer, acting as a channel encoder, maps the feature to a lower dimension according to a specific compression rate. After encoding, the representation undergoes AWGN noise simulation. The channel decoder resizes the representation to a uniform size and then sends it to the semantic decoder for image reconstruction. The classifier is also a CNN-based neural model.

\subsection{Performance Function Fitting}\label{section:accuracy_function}

We evaluate the PSNR performance of the adopted model on the CIFAR-10 dataset, which is aligned with the choice of deep JSCC~\cite{bourtsoulatze2019deep}. Specifically, CIFAR-10 consists of color images with a $32 \times 32$ pixels resolution.

\begin{figure*}
    \centering
    \begin{subfigure}[b]{0.45\linewidth}
        \centering
        \includegraphics[width =\linewidth]{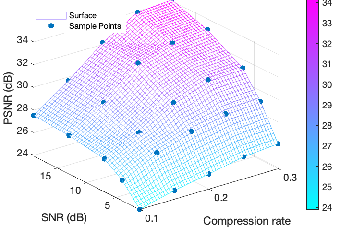}
        \caption{Interploted surface for CIFAR-10.}
    \end{subfigure}
    \hspace{15pt}
    \begin{subfigure}[b]{0.45\linewidth}
        \centering
        \includegraphics[width =\linewidth]{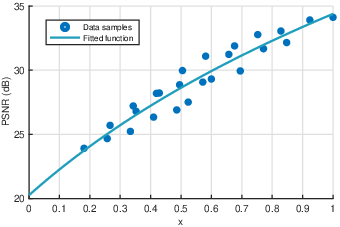}
        \caption{Fitted function for CIFAR-10.}
    \end{subfigure}
    \caption{Interploted surfaces and fitted results for CIFAR-10. Specifically, each data point (i.e., different SNR and compression rate) was run ten times, and the average accuracy was taken to reduce experimental error.}
    \label{fig:Inteploted _surface}
\end{figure*}

% \begin{figure}
%     \centering
%     \includegraphics[width= 0.7\linewidth] {figs/Fitting_functions}
%     \caption{Modelling functions CIFAR-10}
%     \label{fig:accuracy_function}
% \end{figure}
% \textbf{SemCom Model.}

\textbf{Performance function.} 
The PSNR performance function $P(\rho_n, S_n)$ for each user $n$ has two variables: compression rate $\rho_n$ and SNR $S_n$. A higher $\rho_n$ improves accuracy by minimizing information loss in compression, while a higher SNR ensures better semantic content quality and reliability. We employ curve-fitting techniques to develop function $P(\rho_n, S_n)$, and the specific expression is adapted from logarithmic models in previous works~\cite{zhao2023optimizing}, reflecting diminishing returns:
\begin{align}
    P(\rho_n,S_n) = a\ln(c^{\rho}\rho_n+c^{s}S_n+b),
    % 1 - \exp \big({-(c_n^{\rho}\rho_n +c_n^{S}S_n)}\big)
\end{align}
where $a,b,c^{\rho}$ and $ c^{s}$ are coefficients decided by fitting data.

Given the multidimensional nature of $P(\rho_n, S_n)$, we simplify visual interpretation by transforming it into a two-dimensional plot. Let $\rho_n^{\textnormal{max}}$ (resp., $S_n^{\textnormal{max}}$) be the maximum $\rho_n$ (resp., $S_n$) from the dataset. Upon introducing the notation $l = c^{\rho}\rho_n^{\textnormal{max}} + c^{s}S_n^{\textnormal{max}} $, we proceed to assign $\frac{c^{\rho}}{l}\rho_n + \frac{c^{s}}{l}S_n$ as the $x$-axis coordinate, and plot $a\ln(lx+b)$ as the $y$-axis coordinate. This choice is motivated by the fact that $P(\rho_n, S_n) = a\ln(c^{\rho}\rho_n+c^{s}S_n+b) = a\ln(lx+b)$, thus allowing for a direct representation of the function. Following the mentioned transformation, the $x$-coordinate of each data point now falls within the range of 0 and 1.

\textbf{Modelling results.} 
We use the linear method in \textit{scatteredInterpolant} function of Matlab for interpolation. Compared with other interpolation methods (e.g., ``nearest'' and ``natural''), linear interpolation provides smooth transitions between data points without excessive oscillations or discontinuities, and the overall data trend is also maintained well. In addition, we also employ a 5-fold cross-validation to ensure the interpolation accuracy. The entire procedure is repeated ten times and the average mean absolute error (MAE) errors on all folds are 0.4381.
The interpolation and function fitting performance and the fitted function are shown in Fig.~\ref{fig:Inteploted _surface}. 
The expression for the fitted function in Fig.~\ref{fig:Inteploted _surface}(b) is further stated as follows:
\begin{align}
    P(x)\!=\! 18.67\ln(3.35x+5.11), \textnormal{ for } x\!=\! 1.52\rho\!+\!0.03S.
\end{align}

% \begin{table}[!ht]
%     \centering
%     \caption{Accuracy Function Simulation}
%     \label{tab:function_fitting}
%     \begin{tabular}{l|l}
%     \hline
%         Dataset & \begin{tabular}[c]{@{}l@{}}
%         Utility Function from curve-fitting  for  normalized  $x$
%         \end{tabular}\\ \hline
%         MNIST & 
%         $1-0.9168e^{-5.4490x}$, for $x = 0.4058\rho+0.1671S$ 
%           \\ \hline
%         CIFAR-10&  
%         $1-0.7396e^{-0.9438x}$, for $x = 0.4618\rho+0.0365S$.
%           \\ \hline
%     \end{tabular}
% \end{table}

\subsection{Parameter Settings}\label{section:settings}

Our experiments employ a foundational set of parameters as defaults, primarily adapted from~\cite{yang2020energy}. Table~\ref{tab:sys_para} summarizes the overall system settings. The channel model is devised for urban landscapes, encompassing a path loss calculated by $128.1+37.6\log(\text{distance})$ with an $8$ dB standard deviation. 
Gaussian noise power spectral density
$N_0=-174$ dBm/Hz represents thermal noise at 20$^{\circ}$C.

Users are randomly dispersed within a circular area with a 500-meter diameter centered on the base station. The system contains $50$ users and a total bandwidth of $20$ MHz. The effective switched capacitance (\(\kappa\), $10^{-28}$) and the sample size in a batch ($D_n$, 32) are preset. CPU cycle requirements (\(c_{1n}\)) and (\(c_{2n}\)) randomly range from $[1,3] \times 10^6$ and $[3,5] \times 10^6$, respectively.
The data size (\(d_n\)) is $1$ Mbits. The maximum transmission power (\(p_n^{\textnormal{max}}\)) and device computing frequency (\(f_n^{\textnormal{max}}\)) are 20 dBm and 1 GHz. The maximum allocated frequency to device $n$ at BS (\(h_n^{\textnormal{max}}\)) is 5 GHz. The PSNR threshold \(P_n^{\textnormal{min}}\) is initialized at 25 dB. The semantic compression ratio $\rho$ is between ($\rho_n^{\textnormal{min}}$, 0.1) and ($\rho_n^{\textnormal{max}}$, 0.3).

\begin{table}[ht!]
\caption{System Parameter Settings}\label{tab:sys_para}
\begin{tabular}{ll}
\hline
\textbf{Parameter}                                     & \textbf{Value}   \\ \hline
The path loss model & $128.1\!+\!37.6\log(\text{d})$ \\
The standard deviation of shadow fading   & $8$ dB \\
Noise power spectral density $N_0$  & $-174$ dBm/Hz \\
The number of users $N$  & $50$  \\
Total Bandwidth $B$  & $20$ MHz  \\
Effective switched capacitance $\kappa$ & $10^{-28}$ \\
The number of samples $D_{n}$& $32$  \\
CPU cycles on each device $c_{1n}$   &  $[1,3] \times 10^6$ \\
CPU cycles on the BS $c_{2n}$  &  $[3,5] \times 10^6$ \\
The uploaded data size $d_{n}$ & $1$ Mbits     \\
Max. uplink transmission power $p_n^{\textnormal{max}}$ & $20$ dBm   \\
Max. device computing frequency $f_n^{\textnormal{max}}$ & $1$ GHz  \\
Max. BS frequency allocated to device $h_n^{\textnormal{max}}$ & $5$ GHz  \\
The PSNR threshold $P_n^{\textnormal{min}}$  & $25$ dB  \\
Min. semantic compression ratio $\rho_n^{\textnormal{min}}$ & $0.1$ \\
Max. semantic compression ratio $\rho_n^{\textnormal{max}}$ & $0.3$ \\
\hline
\end{tabular}
\end{table}

\subsection{Comparison of Different Baselines}
We first compare our proposed Algorithm \ref{algo:resourceallocation} with the following $4$ baselines:

\begin{figure*}
    \centering
    \begin{subfigure}{0.30\linewidth}
        \centering
        \includegraphics[width =\linewidth]{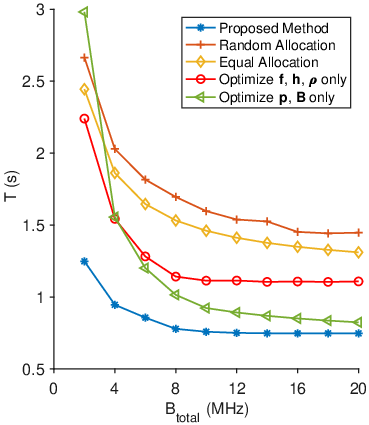}
        \caption{Total time $T$.}
    \end{subfigure}
    \hspace{10pt}
    \begin{subfigure}{0.30\linewidth}
        \centering
        \includegraphics[width =\linewidth]{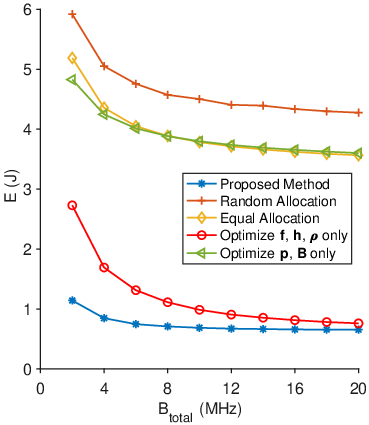}
        \caption{Total energy $E$.}
    \end{subfigure}
    \hspace{10pt}
    \begin{subfigure}{0.30\linewidth}
        \centering
        \includegraphics[width =\linewidth]{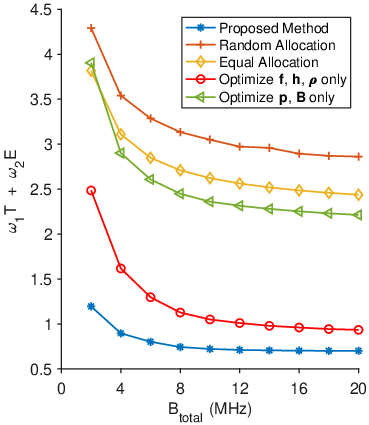}
        \caption{$\omega_1T+\omega_2E$.}
    \end{subfigure}
    \caption{Time, energy, and their weighted sum under different total bandwidth $B_{\textnormal{total}}$.}
    \label{fig:max_B}
\end{figure*}

\begin{itemize}
    \item[(i)] 
    \textbf{Random Allocation:} For each device $n$, the quantities $p_n, f_n, h_n, \rho_n$ take  random values from $[0,p_n^{\textnormal{max}}]$, $[0,f_n^{\textnormal{max}}]$, $[0,h_n^{\textnormal{max}}]$, $[\rho_n^{\min},\rho_n^{\textnormal{max}}]$ respectively. A truncated normal distribution is employed for this purpose, centered at the range's midpoint and with a standard deviation set to one-tenth of the range width.
    Besides, $B_n$ is selected from $[\frac{B_{\textnormal{total}}}{1.25N},\frac{B_{\textnormal{total}}}{0.8N}]$ and the sum of $B_n$ is equal to $B_{\textnormal{total}}$.
    \item[(ii)] \textbf{Average Allocation:} Here we set each $B_n$ as $\frac{B_{\textnormal{total}}}{N}$, and $p_n, f_n, h_n, \rho_n$ as $\frac{p_n^{\textnormal{max}}}{2}, \frac{f_n^{\textnormal{max}}}{2}, \frac{h_n^{\textnormal{max}}}{2}, \frac{\rho_n^{\textnormal{max}}+\rho_n^{\textnormal{min}}}{2}$ respectively to evenly distribute resources. 
    \item[(iii)] \textbf{Optimize $\bm{p},\bm{B}$ only:} Here we fix each $f_n,h_n$ and $\rho_n$ as the default settings in Average Allocation. Then, we optimize $p_n$ and $B_n$ of each device $n$. 
    \item[(iv)] \textbf{Optimize $\bm{f},\bm{h},\bm{\rho}$ only:} Each $p_n$ and $B_n$ are fixed as in Average Allocation, and we only optimize$f_n,h_n$ and $\rho_n$ of each device $n$. 
\end{itemize}

Then, we compare the performance between our proposed method and baselines under varying resource conditions. To reduce the interference of randomness, we choose data points uniformly and run each experiment 100 times to take the average value as the final result.

\begin{figure*}
    \centering
    \begin{subfigure}{0.30\linewidth}
        \centering
        \includegraphics[width =\linewidth]{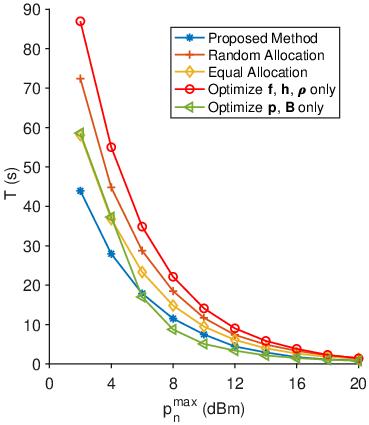}
        \caption{Total time $T$.}
    \end{subfigure}
    \hspace{10pt}
    \begin{subfigure}{0.30\linewidth}
        \centering
        \includegraphics[width =\linewidth]{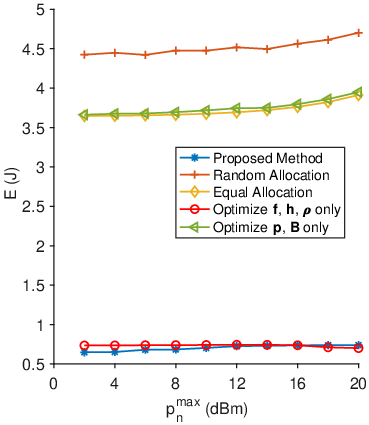}
        \caption{Total energy $E$.}
    \end{subfigure}
    \hspace{10pt}
    \begin{subfigure}{0.30\linewidth}
        \centering
        \includegraphics[width =\linewidth]{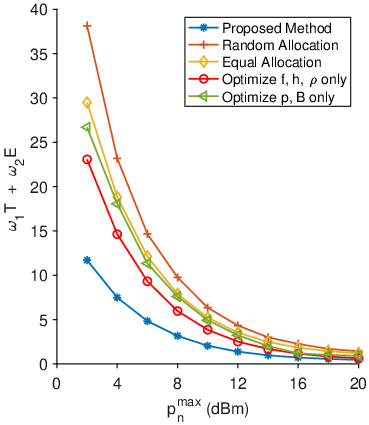}
        \caption{$\omega_1T+\omega_2E$.}
    \end{subfigure}
    \caption{Time, energy, and their weighted sum under different maximum transmission power $p_n^{\textnormal{max}}$.}
    \label{fig:max_p}
\end{figure*}

\begin{figure*}[h]
    \centering
    \begin{subfigure}{0.30\linewidth}
        \centering
        \includegraphics[width =\linewidth]{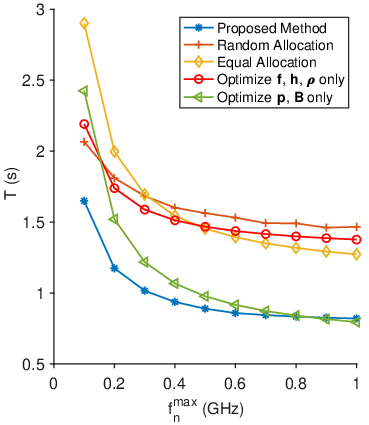}
        \caption{Total time $T$.}
    \end{subfigure}
    \hspace{10pt}
    \begin{subfigure}{0.30\linewidth}
        \centering
        \includegraphics[width =\linewidth]{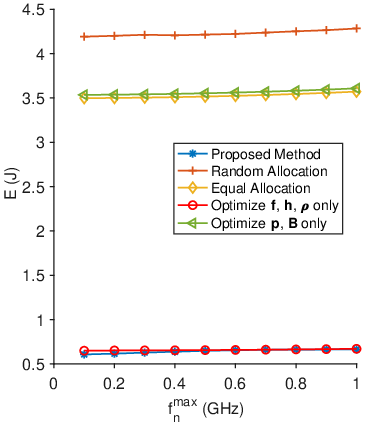}
        \caption{Total energy $E$.}
    \end{subfigure}
    \hspace{10pt}
    \begin{subfigure}{0.30\linewidth}
        \centering
        \includegraphics[width =\linewidth]{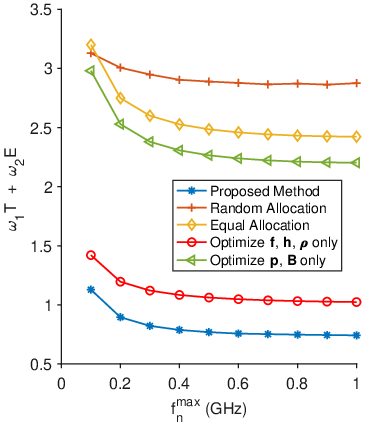}
        \caption{$\omega_1T+\omega_2E$.}
    \end{subfigure}
    \caption{Time, energy, and their weighted sum under different maximum frequency $f_n^{\textnormal{max}}$.}
    \label{fig:max_f}
\end{figure*}

\textbf{Total Bandwidth.} Here we vary the total available bandwidth $B_{\textnormal{total}}$ from $1$ MHz to $20$ MHz. Figs.~\ref{fig:max_B}(a),(b), and (c) demonstrate the total time completion $T$, energy consumption $E$, and weighted sum $\omega_1T+\omega_2E$ with respect to $B_{\textnormal{total}}$. Higher total bandwidth means more available communication resources so that the general decreasing trend can be observed in the curves of each figure. 

Specifically, in Fig.~\ref{fig:max_B}(a), we can see 
our proposed method stays optimal under different $B_{\textnormal{total}}$. Besides, ``Optimize $\bm{p},\bm{B}$ only" outperforms ``Optimize $\bm{f},\bm{h},\bm{\rho}$ only" as $B_{\textnormal{total}}$ increases.
% It is worth noting that 
% ``Optimize $\bm{f},\bm{h},\bm{\rho}$ only" is better at first and is later overtaken by ``Optimize $\bm{p},\bm{B}$ only" as $B_{\textnormal{total}}$ increases. 
We analyze that it is because ``Optimize $\bm{p},\bm{B}$ only" could bring better communication optimization in the case of abundant bandwidth resources. 
Fig.~\ref{fig:max_B}(b) presents the overall energy consumption. The computation in training incurs more energy overhead compared to transmission. Hence, our proposed method and ``Optimize $\bm{f},\bm{h},\bm{\rho}$ only" could obtain much better performance on energy optimization. The performance of our method on $E$ is also better than other benchmarks. 
In Fig.~\ref{fig:max_B}(c), our method outperforms all the baselines under different $B_{\textnormal{total}}$.
% Additionally, it is worth noting that ``Optimize $\bm{p},\bm{B}$ only" is better at first and is later overtaken by ``Optimize $\bm{f},\bm{\rho}$ only" as $B_{\textnormal{total}}$ increases. 

\textbf{Maximum Transmission Power.} We vary the maximum transmission power $p_n^{\textnormal{max}}$ from $4$ dBm to $20$ dBm, and show the experimental results in Figs.~\ref{fig:max_p}(a),(b) and (c). Higher $p_n^{\textnormal{max}}$ indicates lower limits on the energy consumption $E$; hence, the algorithm could utilize higher transmission power to optimize time consumption $T$. 

In Fig.~\ref{fig:max_p}(a), all curves show a downward trend on $T$, which is consistent with our analysis above. It is worth noting that although ``Optimize $\bm{p},\bm{B}$ only" sometimes could achieve smaller $T$ than ours, the gap is much smaller than that of $E$.  
We can observe a moderate rise of $E$ in Fig.~\ref{fig:max_p}(b). This is because as the range of $p_n^{\textnormal{max}}$ expands, more optimal solutions $\bm{p}$ could be chosen to bring about better joint optimization by sacrificing energy optimization to some extent. Finally, in Fig.~\ref{fig:max_p}(c), the proposed method still has the best joint optimization performance, proving the superiority of our method. 

\textbf{Maximum Device Computation Frequency.} We vary the maximum CPU frequencies $f_n^{\textnormal{max}}$ from $0.1$ GHz to $1$ GHz and compare $T$ and $E$ in Fig.~\ref{fig:max_f}(a) and Fig.~\ref{fig:max_f}(b). Intuitively, higher CPU frequency means shorter computation latency but higher energy consumption. Thus, there is a downward trend for curves in Fig.~\ref{fig:max_f}(a) and vice versa in Fig.~\ref{fig:max_f}(b), respectively. It is also notable that our proposed method and ``Optimize $\bm{f},\bm{\rho}$ only" will enter a stationary phase as the $f_n^{\textnormal{max}}$ continuously increase. This is because the optimal frequencies have already been found. In Fig.~\ref{fig:max_f}(c), it is evident that our proposed method always remains the best. 

The above experiments demonstrate our proposed method's superiority over benchmarks across various resource scenarios. 

\begin{figure}
    \centering
    \begin{subfigure}[b]{0.49\linewidth}
        \centering
        \includegraphics[width =\linewidth]{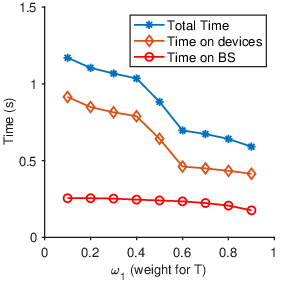}
        \caption{Time.}
    \end{subfigure}
    \hfill
    \begin{subfigure}[b]{0.49\linewidth}
        \centering
        \includegraphics[width =\linewidth]{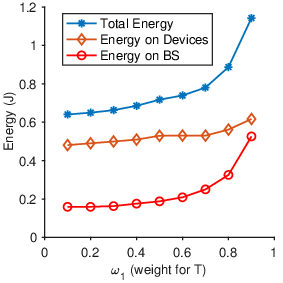}
        \caption{Energy.}
    \end{subfigure}
    \caption{Time and energy under different weight parameters $\omega_1$.}
    \label{fig:omega}
\end{figure}

\begin{figure}
    \centering
    \begin{subfigure}[b]{0.49\linewidth}
        \centering
        \includegraphics[width =\linewidth]{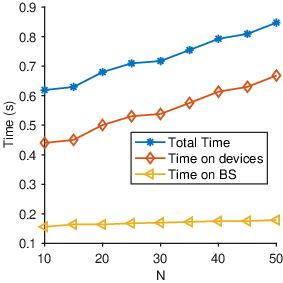}
        \caption{Time.}
    \end{subfigure}
    \hfill
    \begin{subfigure}[b]{0.49\linewidth}
        \centering
        \includegraphics[width =\linewidth]{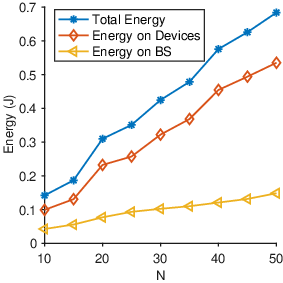}
        \caption{Energy.}
    \end{subfigure}
    \caption{Time and energy under different numbers of users $N$.}
    \label{fig:number_of_users}
\end{figure}

\subsection{Weight Parameters}

Here, we study how values of weight parameters influence our algorithm's performance. In our problem, there are two weight parameters $\omega_1$ and $\omega_2$, and the sum of them equals $1$. If the system is more sensitive to latency, the parameter $\omega_1$ for $T$ should be set as a larger value. Then, if the system has higher requirements for low energy consumption, $\omega_2$ for $E$ should be raised accordingly. 

We set the parameter $\omega_1$ within the range of $0.1$ to $0.9$, while $\omega_2$ is configured as $1-\omega_1$. The experimental result is reported in Figs.~\ref{fig:omega}(a) and (b). In Fig.~\ref{fig:omega}(a), we can see as $\omega_1$ increases, the time consumption continues to decrease. This is due to a larger $\omega_1$ could motivate our algorithm to allocate more resources to the optimization of time.
Similarly, a decrease in $\omega_2$ represents less emphasis on energy optimization, so the total energy consumption $E$ in Fig.~\ref{fig:omega}(b) increases. 
In addition to this, we find that the users' overhead is greater than that of the BS. This is due to the fact that our algorithm optimizes the computational overhead more efficiently, making the uplink transmission, which is the responsibility of the user, the main source of overhead.
% The difference in computation and transmission is  \textcolor{blue}{because} in Algorithm \ref{algo:resourceallocation}, we optimize $\bm{f}$ first and later optimize $\bm{p},\bm{B}$ based on optimized frequencies. The frequencies $\bm{f}$ primarily affect computation time and computational energy. 

\subsection{Number of Users} Next, we explore the effect of the total number of mobile devices $N$ on optimization performance. Intuitively, when the total available resources are fixed, a larger number of $N$ could increase the overall consumption within the system as more communications and computations are required.

Specifically, we vary the number of mobile devices $N$ from $10$ to $50$, and the corresponding results are reported in Figs.~\ref{fig:number_of_users}(a) and (b).
In Fig.~\ref{fig:number_of_users}(a), we could observe an overall increasing trend in time, especially on devices. This is because the total bandwidth $B_{\textnormal{total}}$ is fixed; as the number of users $N$ increases, the average bandwidth allocated per user $n$ decreases, leading to longer transmission times. In Fig.~\ref{fig:number_of_users}(b), there is a sharper increase in energy consumption due to longer transmission times consuming more energy.
% It is worth noting that changes in computation consumption are insignificant because $N$ mainly affects the allocation of communication resources among users.

\subsection{SemCom performance Analysis}

In this section, we vary minimum PSNR value requirements $P_n^{\textnormal{min}}$ on the system and observe how the model performance and consumption will be affected. 
Intuitively, higher $P_n^{\textnormal{min}}$ leads to better performance of the SemCom but also sacrifices somewhat the optimization of overall consumption.  

We increase $P_n^{\textnormal{min}}$ from 20 dB to 30 dB to compare the PSNR values and consumption metrics. 
Specifically, Fig.~\ref{fig:Acc_Consumption_CIFAR10}(a) demonstrates that the average of PSNR values among all devices increases as $P_n^{\textnormal{min}}$ grows. The shallow blue-filled areas ``Range'' show the minimum and maximum PSNR observed in all devices.
Meanwhile, Fig.~\ref{fig:Acc_Consumption_CIFAR10}(b) illustrates that energy and time consumption also have increasing trends.
This correlation is expected: increasing $P_n^{\textnormal{min}}$ leads our algorithm to prioritize model performance for each mobile device, which is often at the expense of other optimizations like higher power usage, increased compression rate, or reduced bandwidth.

We further compare our model's performance under different training settings (i.e., performance requirement $P_n^{\textnormal{min}}$ ) with that of the typical JPEG technique, and the result is as shown in Fig.~\ref{fig:example_CIFAR}. Particularly, 
Fig.~\ref{fig:example_CIFAR}(a) presents an original example image from CIFAR-10.
Figs.~\ref{fig:example_CIFAR}(b) and (c) provide visualizations of the reconstructed images by Semcom models under different training settings.
Fig.~\ref{fig:example_CIFAR}(d) is the generated image by JPEG.
For fairness, the tested SNR values are all set at 10 dB.
We can see that even under a challenging training condition (SNR=4 dB and $\rho$ = 0.1, i.e., $P_n^{\text{min}}$=25.70 dB), our model still achieves a PSNR of 25.67 dB, comparable to that of JPEG (26.13 dB).
When the training condition is improved into (SNR=13 dB and $\rho$ = 0.1, i.e., $P_n^{\text{min}}$=26.79 dB), our PSNR value achieves 26.95 dB, outperforming JPEG.
This is because JPEG traditionally compresses images based on the pixel level and cannot resist noise effectively. In contrast, our semantic communication model accounts for channel noise during training and has learned image restoration techniques to mitigate the effects of noise.
% We also test the trained model performance with a real example image from CIFAR-10.
% Figs.~\ref{fig:example_CIFAR}(a)-(d) provides visualizations of reconstructed CIFAR images by trained models under different settings and the conventional JPEG technique.
% We can see that under a higher $P_n^{\text{min}}$ (26.79 dB), the tested PSNR value could achieve 26.95 dB, outperforming that of JPEG. 
% This is because JPEG traditionally compresses images based on the pixel level and cannot resist noise effectively. In contrast, our semantic communication model accounts for channel noise during training and has learned image restoration techniques to mitigate the effects of noise.

\begin{figure}[!t]
    \centering
    \begin{subfigure}[b]{0.49\linewidth}
        \centering
        \includegraphics[width =\linewidth]{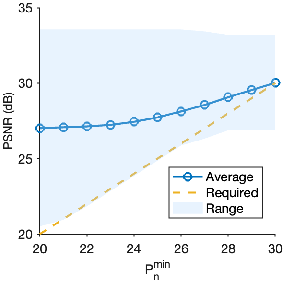}
        \caption{Overall PSNR performance.}
    \end{subfigure}
    \hfill
    \begin{subfigure}[b]{0.49\linewidth}
        \centering
        \includegraphics[width =\linewidth]{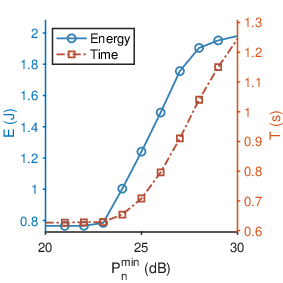}
        \caption{Total consumption.}
    \end{subfigure}
    \caption{Comparision of PSNR values and overall consumption under different $P_n^{\textnormal{min}}$}. 
    \label{fig:Acc_Consumption_CIFAR10}
\end{figure}

\begin{figure}[h]
    \centering
    \begin{subfigure}{0.47\linewidth}
        \centering
        \includegraphics[width =\linewidth]{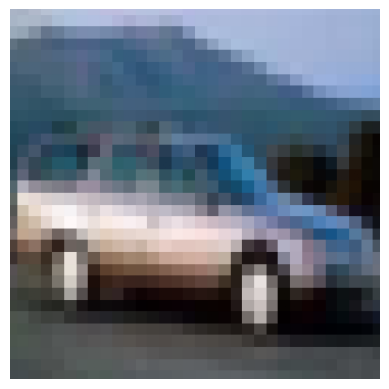}
        \caption{Original image (32$\times$32) }
    \end{subfigure}
    \hfill
    \begin{subfigure}{0.47\linewidth}
        \centering
        \includegraphics[width =\linewidth]{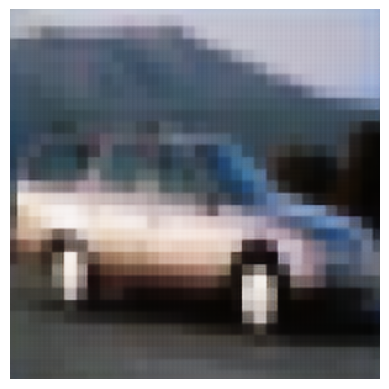}
    \caption{26.95dB(SNR=13dB, $\rho$=0.1)}
    \end{subfigure}
    \begin{subfigure}{0.47\linewidth}
       ~\\~\\ \centering
        \includegraphics[width =\linewidth]{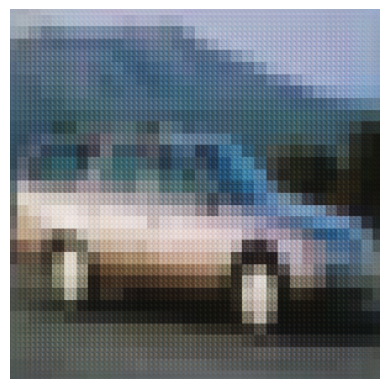}
        \caption{25.67dB(SNR=4dB, $\rho$=0.1)}
    \end{subfigure}
    \hfill
    \begin{subfigure}{0.47\linewidth}
        \centering
        \includegraphics[width =\linewidth]{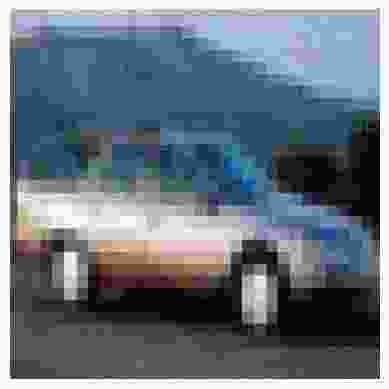}
        \caption{26.13dB (JPEG)}
    \end{subfigure}
    \caption{PSNR values of reconstructed images by our SemCom model and JPEG under different settings, e.g., (SNR=13dB, $\rho$=0.1) denotes the model trained with an SNR=13 dB and a compression rate=0.1.}
    \label{fig:example_CIFAR}
\end{figure}

\subsection{Scalability on High-Resolution Images}
Here we further test the scalability of the trained model when dealing with an image of higher resolution (512$\times$768). 
Particularly, the selected image did not appear in the training dataset (zero-shot) and is more complicated than the training ones (32$\times$32). Figs.~\ref{fig:example_highresolution}(a)-(d) shows the original image and our experimental results. From Fig.~\ref{fig:example_highresolution}(b), we can see our model could achieve a PSNR value of 25.75 dB ($\text{SNR}=13 \text{dB}$, $\rho=0.1$), only slightly lower than that of JPEG (25.89 dB) in Fig.~\ref{fig:example_highresolution}(d). When under a challenging condition (SNR=4 dB, $\rho$ = 0.1), the model could still achieve a PSNR value of 23.21 dB, as shown in Fig.~\ref{fig:example_highresolution}(c).
Thereafter, we believe that if sufficiently trained on more complex datasets, the semantic communication model can achieve performance comparable to or even beat JPEG. 

\begin{figure}
    \centering
    \begin{subfigure}{0.49\linewidth}
        \centering
        \includegraphics[width =\linewidth]{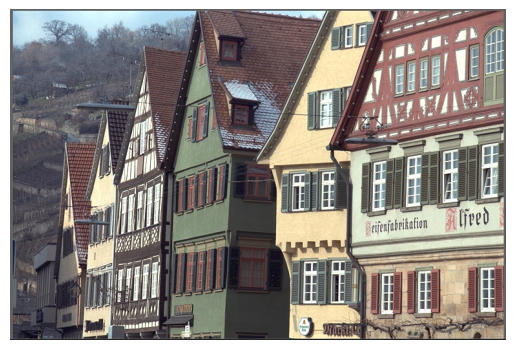}
        \caption{Original image (512$\times$768)}
    \end{subfigure}
    \begin{subfigure}{0.49\linewidth}
        \centering
        \includegraphics[width =\linewidth]{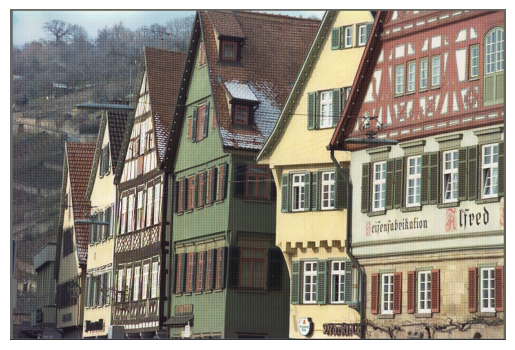}
        \caption{25.75dB (SNR=13dB, $\rho$=0.1)}
    \end{subfigure}
    ~\\ 
    \begin{subfigure}{0.49\linewidth}
        \centering
        \includegraphics[width =\linewidth]{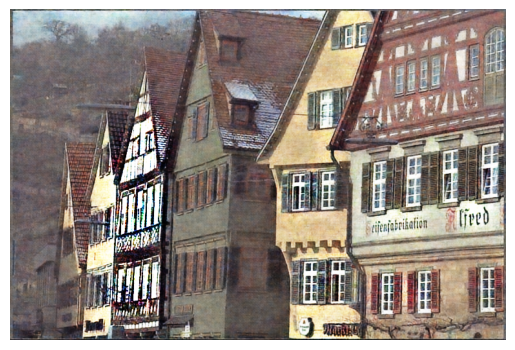}
        \caption{23.21dB (SNR=4dB, $\rho$=0.1)}
    \end{subfigure}
    \begin{subfigure}{0.49\linewidth}
        \centering
        \includegraphics[width =\linewidth]{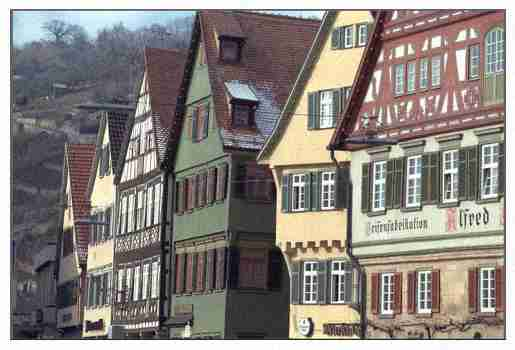}
        \caption{25.89dB (JPEG)}
    \end{subfigure}
    \caption{PSNR values of reconstructed images by our SemCom model and JPEG under different settings.}
    \label{fig:example_highresolution}
\end{figure}  

\section{Discussion \& Conclusion}\label{Sec:Conclusion}

\textbf{Generalizability:} Our current SemCom model is evaluated under AWGN channels for analytical simplicity~\cite{xie2021deep,weng2021semantic,huang2022toward,huang2022toward,zhang2023wireless}. However, it is inherently adaptable to other channel conditions. For instance, the deep JSCC model can be extended to fading channels, such as Rayleigh and Rician, by incorporating non-trainable differentiable layers, as demonstrated in~\cite{xie2021deep,bourtsoulatze2019deep,lyu2024semantic}. Additionally, our resource allocation algorithm is designed to be robust across different channel models, as it does not rely on specific channel parameters as optimization variables. This ensures the algorithm’s applicability in diverse real-world communication environments, making it suitable for various use cases.

\textbf{Deep Reinforcement Learning (DRL) Approaches.}
In this work, we rely on a centralized decision-making process and archive the ``order optimality (i.e., $\mathcal{O}(N)$)'' in time complexity for efficiency. Another promising solution to our problem is DRL.
DRL methods have demonstrated significant promise in addressing complex resource allocation challenges within distributed systems~\cite{wang2020drl,fang2022drl,khani2024deep,chua2023mobile,wenhan2024counterfactual}. These approaches leverage the ability of RL to make sequential decisions while incorporating the power of deep neural networks to process high-dimensional state and action spaces. DRL is particularly well-suited for solving non-convex optimization problems~\cite{chua2023mobile,wenhan2024counterfactual}, especially in time-sensitive scenarios.
In future work, we aim to explore DRL-based approaches to the system model, seeking the feasibility to further enhance the scalability and efficiency of resource allocation strategies. 

\textbf{Conclusion.} 
In this paper, we introduce a distributed image semantic communication system where the local devices and base station train the SemCom model collaboratively. We also study how to optimize transmission power, bandwidth, computing frequency and compression rate to minimize a weighted sum of time and energy consumption while guaranteeing the model performance. A resource allocation algorithm is further proposed to solve the joint optimization problem. Time complexity, solution quality and convergence are also analyzed. In experiments, we first compare our method with benchmarks under different resource conditions and then investigate the influence of weight parameters, number of users, and accuracy requirements. The experimental results demonstrate the effectiveness and superiority of our method.

\bibliographystyle{IEEEtran}
\bibliography{ref}

\begin{IEEEbiography}
[{\includegraphics[width=1in,height=1.25in,clip,keepaspectratio]{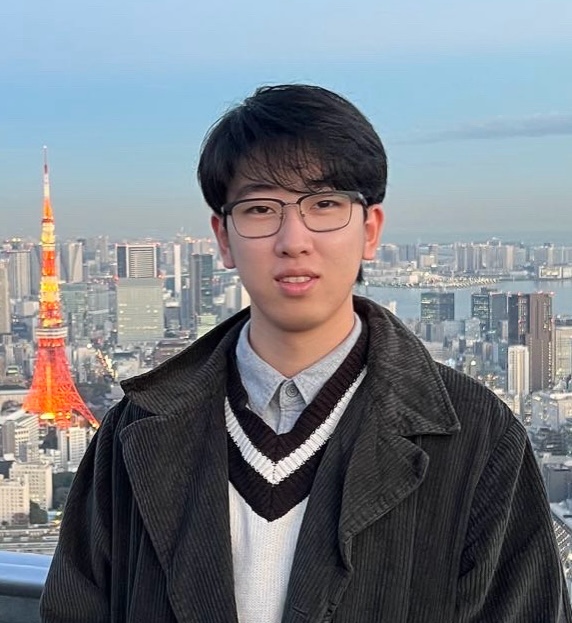}}]
{Yang Li} (Student Member, IEEE) received his B.Eng in July 2022 from Beijing University of Posts and Telecommunications (BUPT) in China. He is currently pursuing a PhD degree in the College of Computing and Data Science (CCDS) at Nanyang Technological University (NTU), Singapore. His research interest focuses on efficiency and security in AI-driven networks, as well as privacy-preserving AI technologies.
\end{IEEEbiography}

\vskip -2\baselineskip plus -1fil

\begin{IEEEbiography}
[{\includegraphics[width=1in,height=1.25in,clip,keepaspectratio]{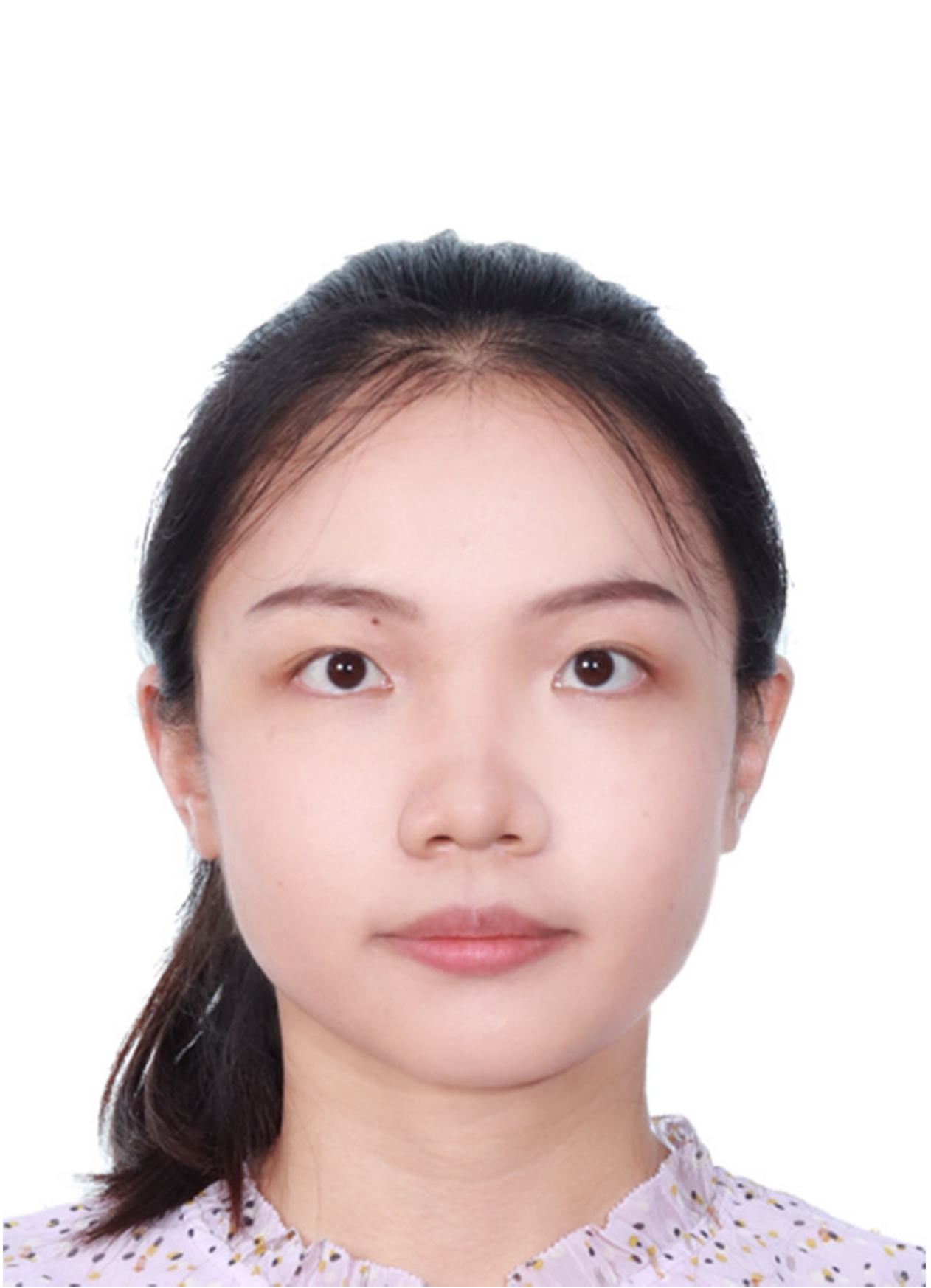}}]
{Xinyu Zhou} received her B.Eng at the department of Computer Science and Engineering from Southern University of Science and Technology (SUSTech) in 2020. She is currently a PhD student from Nanyang Technological University, Singapore. Her research interests include Optimization, Federated Learning, Wireless Communications.
\end{IEEEbiography}

\vskip -2\baselineskip plus -1fil

\begin{IEEEbiography}
[{\includegraphics[width=1in,height=1.25in,clip,keepaspectratio]{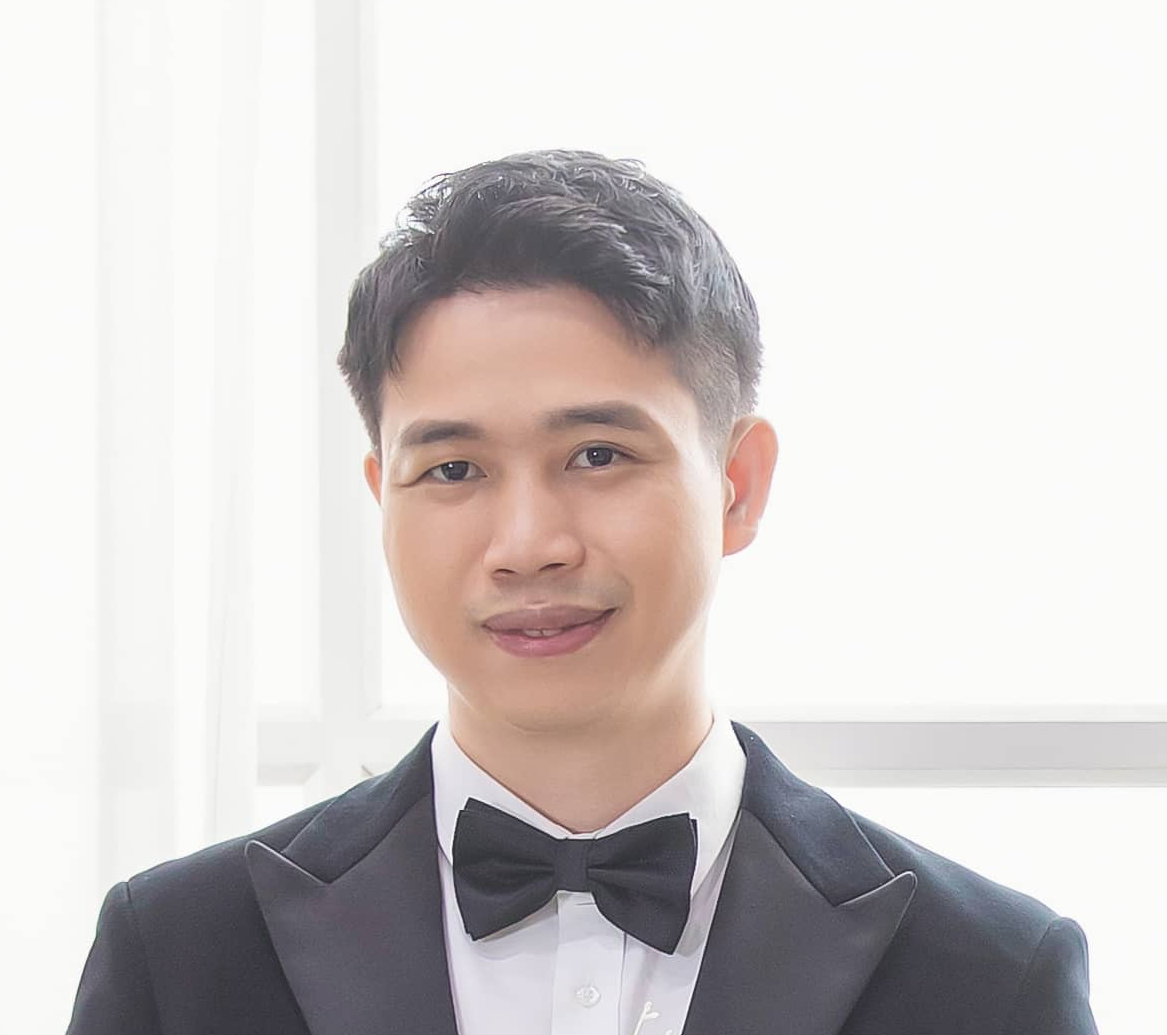}}]
{Jun Zhao} (S'10-M'15) is currently an Assistant Professor in the College of Computing and Data Science (CCDS) at Nanyang Technological University (NTU) in Singapore. He received a PhD degree in May 2015 in Electrical and Computer Engineering from Carnegie Mellon University (CMU) in the USA, affiliating with CMU's renowned CyLab Security \& Privacy Institute, and a bachelor's degree in July 2010 from Shanghai Jiao Tong University in China. 
\end{IEEEbiography}
\end{document}